\documentclass[a4paper,12pt]{amsart}
\usepackage{amsmath}
\usepackage{amsthm}
\usepackage{amssymb}

\newcommand{\abs}[1]{\vert #1 \vert}  
\newcommand{\ov}{\overline}

\newcommand{\dif}{\mathrm{d}}

\newcommand{\HH}{\mathcal{H}}
\newcommand{\VV}{\mathcal{V}}

\newcommand{\RR}{\mathbb{R}}

\DeclareMathOperator{\di}{div}
\DeclareMathOperator{\gr}{grad}

\DeclareMathOperator{\Scal}{Scal}
\DeclareMathOperator{\Ric}{Ric}
\DeclareMathOperator{\Hess}{Hess}

\newtheorem{te}{Theorem}
\newtheorem{pr}{Proposition}
\newtheorem{co}{Corollary}
\newtheorem{lm}{Lemma}
\newtheorem{de}{Definition}    
\newtheorem{re}{Remark}

\begin{document}

\title[Shear-free perfect fluids with linear EoS]{Shear-free perfect fluids with linear equation of state}

\author{Radu Slobodeanu}

\address{Department of Theoretical Physics and Mathematics, Faculty of Physics, University of Bucharest, P.O. Box Mg-11, RO--077125 Bucharest-M\u agurele, Romania \ $\text{and}$}

\address{Institute of Mathematics, University of Neuch\^atel, 11 rue Emile Argand, 2000 Neuch\^atel, Switzerland.}

\email{radualexandru.slobodeanu@g.unibuc.ro}

\begin{abstract}
\noindent We prove that shear-free perfect fluid solutions of Einstein's field equations must be either expansion-free or non-rotating (as conjectured by Ellis and Treciokas) for all linear equations of state $p = w \rho$ except for $w \in \left\{-\tfrac{1}{5}, -\tfrac{1}{6}, -\tfrac{1}{11}, -\tfrac{1}{21}, \tfrac{1}{15}, \tfrac{1}{4}\right\}$. 
\end{abstract}

\thanks{I thank to E. Loubeau for many helpful discussions.}

\subjclass[2010] {53B30, 53C43, 53Z05, 83C55.}

\keywords{Harmonic morphism, perfect fluid, Einstein equations.}

\maketitle

\section{Introduction}
For the Einstein field equations with perfect fluid sources, an important insight into the structure of solutions can be obtained under two fairly general and observationally defensible assumptions: the equation of state for the pressure $p$ and energy density $\rho$ is barotropic (i.e. $p=p(\rho)$) and the flow is shear-free (i.e. fluid's velocity is a transversally conformal vector). While the former restriction is specific to isoentropic fluids, the latter is essentially a kinematic condition of isotropy. In cosmology, the shear-free condition expresses the isotropy of the relative recessional motion of the galaxies (but allows the red shift and the cosmic microwave background radiation to be anisotropic) and is a common feature of standard spacetimes, such as the FRW and G\"odel models. In kinetic theory this characterizes \cite{te} the velocity of collision-dominated gases with isotropic distribution function, under the Einstein-Boltzmann equations.  

The first result indicating the nature of solutions in this context was stated by G\"odel (\cite{god}): a shear-free dust fluid (i.e. $p=0$, in particular a congruence of timelike geodesics) in a spatially homogeneous spacetime of type IX cannot both expand and rotate. This contrasts Newtonian cosmology where expanding and rotating (shear-free) solutions exist and can avoid the singularity formation \cite{nar}. G\"odel's result was later  proved \cite{e} to remain true without symmetry assumptions and it can be seen as a timelike analogue of the Goldberg-Sachs theorem \cite{gold}. After proving that the same conclusion holds for radiation fluids (i.e. $p=\tfrac{1}{3}\rho$), Ellis and Treciokas (\cite{te}) formulated the \emph{shear-free fluid conjecture}:
\begin{quote} 
If the velocity vector field of a self-gravitating barotropic perfect fluid ($p + \rho \neq 0$ and $p = p(\rho)$) is shear-free, then either the expansion or the rotation of the fluid vanishes.
\end{quote}

Since then, many partial results have been obtained by specializing $(i)$ the curvature (divergence-free electric or magnetic part of Weyl curvature \cite{carmi, ber1}, Petrov types N \cite{carm1} and III \cite{carm2}), $(ii)$ kinematical quantities (parallel vorticity and acceleration \cite{coll}, functionally dependent expansion and energy density \cite{lancol} or expansion and rotation scalar \cite{sop}, fluid flow parallel with a conformal vector field \cite{coley}), or  $(iii)$ the equation of state ($p=w\rho$, with $w=0, \pm \tfrac{1}{3}, \tfrac{1}{9}$ \cite{e, la, slob, te, ber0}). Further aspects supporting the conjecture can be found in \cite{col}; for a recent discussion, see \cite{el}.

In Riemannian geometry, several results of similar flavour have been independently discovered. They are all related to the notion of ($r$-) harmonic morphism (see \cite{ud} for a general account). To connect them to the above results in general relativity, let us recall that the relativistic perfect fluids (whether self-gravitating or not) can benefit from a variational treatment if we define the action functional in terms of a submersion from the (4-dimensional) spacetime having its fibres tangent to the fluid velocity, $U$. This approach, initiated in \cite{car, langl, tau}, has been mathematically settled in \cite{cristo} and renewed in recent physics literature, e.g. \cite{ratta, nico}. In particular, it turns out \cite{slo} that a shear-free relativistic perfect fluid with linear equation of state $(U, p = \frac{r - 3}{3}\rho, \rho = \lambda^{r})$ ($r\neq 0$) is determined by a (locally defined) horizontally conformal submersion which is a critical point of the action functional, i.e. an $r$-harmonic morphism (\cite{lube}) with 1-dimensional timelike fibres tangent to $U$ and dilation $\lambda$. In Riemannian signature and dimension at least four, precisely in the 1-dimensional fibres case, 2-harmonic morphisms on Einstein manifolds and $(r\geq 2)$-harmonic morphisms on constant curvature spaces have been classified (\cite{bry, mo, pan, pant}): they are all either of \textit{Killing type} (with $U$ collinear with a Killing vector field and $\di U=0$, so "expansion-free"), of \textit{warped product type} (with $U$ self-parallel and orthogonal to a foliation by hypersurfaces, so "non-rotating"), or (only on Einstein four-manifolds) of a \textit{third type} which implies Ricci flatness. A similar result holds also on conformally-flat manifolds \cite{panti}. 

Adopting the standpoint of $r$-harmonic morphisms, in this paper we prove the shear-free fluid conjecture for all linear equations of state $p = w \rho$ except for six values of $w$ which require a case by case analysis. Together with the four values already settled, they form the full set of exceptional values of the present proof. 

\section{Preliminaries}
In this section we present the terminology related to perfect fluids and introduce some basic results needed latter on. For a summary of notations and conventions, see the Appendix.

\subsection{Perfect fluids coupled with gravity}
\begin{de} $($\cite{mis,one}$)$
Let $(M,g)$ be a $4$-dimensional spacetime. A triple $(U, p, \rho)$ is called (relativistic) \emph{perfect fluid} if:
\begin{enumerate}
\item[$(i)$] $U$ is a timelike future-pointing unit vector field on $M$, called the \emph{flow vector field} (or \emph{normalized 4-velocity}),

\item[$(ii)$] $\rho, p: M\to \RR$ are real functions, called \emph{mass (energy) density} and \emph{pressure}, respectively,

\item[$(iii)$] the \emph{stress-energy tensor} of the fluid is conserved: 
$$\di \left(p \, g + (p + \rho)\varpi \otimes \varpi\right)=0,$$
where $\varpi(X) = g(U, X)$, for all $X$ tangent to $M$.
\end{enumerate}
If instead of $(iii)$, the \emph{Einstein field equations} are satisfied
\begin{equation}\label{einstein}
\Ric - \tfrac{1}{2}\Scal \cdot g = p \, g + (p + \rho)\varpi \otimes \varpi,
\end{equation} 
then  $(U, p, \rho)$ is called \emph{perfect fluid coupled to gravity} (or \emph{self-gravitating}). 
\end{de}

Condition $(iii)$ of the definition splits into the fluid's equations:
\begin{equation}\label{euler}
\begin{split}
(\rho+p)\nabla_{U}U + \gr^{\HH}p &=0 \quad (\text{Euler equations}),\\
(\rho + p)\di U + U(\rho) &=0  \quad (\text{energy conservation}),
\end{split}
\end{equation}
where $\gr^{\HH}p$ is the \emph{spatial pressure gradient}.
The coupling with gravity imposes the following block-diagonal form of the Ricci tensor:
\begin{equation}\label{couple}
\begin{split}
\Ric(U,U) &= \tfrac{1}{2}(\rho + 3p),  \quad \Ric(X,U) = 0, \quad \forall X \perp U,\\
\Ric(X,Y) &= \tfrac{1}{2}(\rho - p)g(X,Y) , \quad \forall X,Y \perp U.
\end{split}
\end{equation}
Notice that $\Scal = \rho-3p$.

We denote by $\VV$ the (vertical) foliation spanned by $U$ and by $\HH$ its (horizontal) complement in the tangent bundle of $M$. The orthogonal projection on $\HH$ (or $\VV$) of a vector field $X$ will be denoted by $X^\HH$ (or $X^\VV$). Then any vector $X$ has the splitting: $X = X^\HH - \varpi(X)U$. Let $\nabla U$ be the tensor defined by $(\nabla U)(X,Y)=g(\nabla_X U, Y)$, for all $X$, $Y$.

\begin{pr} The covariant derivative of the flow vector field decomposes as follows
\begin{equation}\label{deco}
\nabla U + \varpi \otimes (\nabla_U U)^\flat = \omega + \sigma +
\tfrac{1}{3}\di U \, g^\HH ,
\end{equation}
where $g^\HH(X, Y)=g (X^\HH, Y^\HH)$ is the \emph{orthogonal projector} to $U$ and
\begin{equation}\label{defkin}
\begin{split}
\omega(X, Y)&=\tfrac{1}{2} \dif \varpi (X^\HH, Y^\HH),\\
\sigma(X, Y)&=\tfrac{1}{2} \left( \mathcal{L}_U g -
\tfrac{2}{3} \di U \, g \right)(X^\HH, Y^\HH).
\end{split}
\end{equation}
\end{pr}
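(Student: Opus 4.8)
The plan is to prove the identity by reducing both sides to purely horizontal $(0,2)$-tensors and then matching their symmetric and antisymmetric parts. First I would record the single structural fact that drives everything: since $U$ is a unit field, differentiating $g(U,U)=\mathrm{const}$ yields $g(\nabla_X U, U)=0$ for every $X$, so $\nabla_X U$ is always orthogonal to $U$.

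Next I would show that the left-hand side $T := \nabla U + \varpi \otimes (\nabla_U U)^\flat$ vanishes whenever either slot is fed $U$. Feeding $U$ into the second slot gives $g(\nabla_X U, U) + \varpi(X)\,g(\nabla_U U, U)$, which is zero by the fact above. Feeding $U$ into the first slot gives $g(\nabla_U U, Y) + \varpi(U)\,g(\nabla_U U, Y)$, which vanishes precisely because $\varpi(U)=g(U,U)=-1$; this is exactly the role of the acceleration correction $\varpi \otimes (\nabla_U U)^\flat$, namely to cancel the one non-horizontal contribution of $\nabla U$. Using the splitting $X = X^\HH - \varpi(X)U$, it then follows that $T(X,Y)=T(X^\HH,Y^\HH)$, so $T$ is horizontal, exactly as the right-hand side manifestly is. Hence it suffices to verify the identity on horizontal $X,Y$.

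On horizontal $X,Y$ (where $\varpi(X)=\varpi(Y)=0$) the left side reduces to $g(\nabla_X U, Y)$, which I would split into its antisymmetric and symmetric parts. For the antisymmetric part I would invoke the torsion-free, metric property of the Levi-Civita connection to write $\dif\varpi(X,Y)=g(\nabla_X U, Y)-g(\nabla_Y U, X)$, identifying $\tfrac12\dif\varpi$ with $\omega$. For the symmetric part I would use the Killing-type identity $(\mathcal{L}_U g)(X,Y)=g(\nabla_X U, Y)+g(\nabla_Y U, X)$ and then rearrange $\tfrac12\mathcal{L}_U g = \tfrac12(\mathcal{L}_U g - \tfrac23\di U\, g) + \tfrac13\di U\, g$, which on $\HH$ is precisely $\sigma + \tfrac13\di U\, g^\HH$. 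Adding the two parts gives the claimed decomposition.

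The computation is essentially routine; the main obstacle is the sign bookkeeping rather than any conceptual difficulty. The whole decomposition hinges on the timelike normalization $\varpi(U)=-1$, which is what forces the acceleration term to cancel the non-horizontal part of $\nabla U$; with the opposite convention the correction term would carry the wrong sign and the statement would fail. As a consistency check I would verify the horizontal trace: since $g(\nabla_U U, U)=0$, one has $\di U = \tr_g \nabla U = \tr_\HH \nabla U$, so the trace of the symmetric horizontal part equals $\di U$ and, because $\tr_\HH g^\HH = 3$, the shear $\sigma$ is genuinely trace-free, in agreement with the coefficient $\tfrac13$.
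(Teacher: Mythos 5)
Your proof is correct and follows essentially the same route as the paper: both reduce the identity to horizontal vectors (the acceleration term exactly cancelling the vertical contribution of $\nabla U$, which in the paper appears as the splitting $g(\nabla_X U, Y)=g(\nabla_{X^\HH} U, Y^\HH)-\varpi(X)g(\nabla_U U, Y)$) and then identify the antisymmetric part with $\tfrac12\dif\varpi$ and the symmetric part with $\tfrac12\mathcal{L}_U g$, which is precisely the content of the paper's ``rewritten Koszul formula''. Your version merely makes the horizontality of the left-hand tensor explicit via $\varpi(U)=-1$ and adds a useful trace-free consistency check, neither of which changes the argument.
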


\begin{proof}
Notice that Koszul formula can be rewritten as follows:
$$
g(\nabla_X U, Y)=\dif \varpi (X,Y)+\tfrac{1}{2}(\mathcal{L}_U g)(X, Y), \quad \forall X,Y.
$$
Since $g(\nabla_X U, Y) = g(\nabla_{X^\HH} U, Y^\HH) - \varpi(X)g(\nabla_U U, Y)$, applying Koszul formula on the first right-hand term yields the result.
\end{proof}

\begin{de}[Kinematical quantities] For a relativistic fluid $(U, p, \rho)$, $\nabla_U U$ is the \emph{acceleration} (vector), $\di U$ is the \emph{expansion rate} (scalar), $\omega$ is the \emph{rotation rate} or \emph{vorticity} (2-form) and $\sigma$ is the \emph{shear rate} (symmetric and traceless 2-covariant tensor).
\end{de}

From the following slightly reformulated Yano's identity (\cite{yan})
\begin{equation}\label{yano}
\mathrm{Ric}(X, Y)=\di\left(\nabla_X Y \right)- X(\di Y) - 
\tfrac{1}{2}\langle \mathcal{L}_{X}g , \mathcal{L}_{Y}g \rangle + \tfrac{1}{2}\langle\dif X^{\flat}, \dif Y^{\flat}\rangle ,
\end{equation}
where $X,Y$ are tangent vectors to a (semi-)Riemannian manifold, we obtain

\begin{pr} [Raychaudhuri equation \cite{ray}] On the spacetime $(M,g)$ the following identity hold:
\begin{equation}\label{ray}
U(\di U) = -\tfrac{1}{3}(\di U)^2 + 2\left(\abs{\omega}^2- \abs{\sigma}^2\right) + \di\left(\nabla_U U \right) - \mathrm{Ric}(U, U).
\end{equation}
\end{pr}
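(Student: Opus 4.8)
The plan is to specialise the reformulated Yano identity \eqref{yano} to $X=Y=U$ and then convert the two quadratic terms into kinematical invariants by means of \eqref{deco}. Since $U^\flat=\varpi$, putting $X=Y=U$ in \eqref{yano} yields
\[
\Ric(U,U)=\di(\nabla_U U)-U(\di U)-\tfrac12\langle\mathcal{L}_U g,\mathcal{L}_U g\rangle+\tfrac12\langle\dif\varpi,\dif\varpi\rangle ,
\]
so that, isolating $U(\di U)$, the whole task reduces to evaluating the last two inner products in terms of $\omega$, $\sigma$, $\di U$ and $\nabla_U U$.

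To do this I would fix a local orthonormal frame $\{U,e_1,e_2,e_3\}$ adapted to $TM=\VV\oplus\HH$, with $g(U,U)=-1$ and $g(e_i,e_j)=\delta_{ij}$, and split each of the $(0,2)$-tensors $\dif\varpi$ and $\mathcal{L}_U g$ into a purely spatial block and a timelike--spatial (``mixed'') block. Their $(U,U)$-components vanish because $g(\nabla_X U,U)=\tfrac12 X(g(U,U))=0$, and for the same reason their mixed components coincide, both being equal to the acceleration: $\dif\varpi(U,e_i)=(\mathcal{L}_U g)(U,e_i)=g(\nabla_U U,e_i)$. The spatial blocks are read off directly from the definitions \eqref{defkin}: $\dif\varpi(e_i,e_j)=2\,\omega(e_i,e_j)$, while $(\mathcal{L}_U g)(e_i,e_j)=2\,\sigma(e_i,e_j)+\tfrac23\di U\,g^\HH(e_i,e_j)$.

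Next I would compute the two norms using the normalisation $\langle S,T\rangle=\tfrac12 S_{ab}T^{ab}$ consistent with \eqref{defkin}. The spatial blocks contribute $4\abs{\omega}^2$ to $\langle\dif\varpi,\dif\varpi\rangle$ and $4\abs{\sigma}^2+\tfrac23(\di U)^2$ to $\langle\mathcal{L}_U g,\mathcal{L}_U g\rangle$; here the expansion term comes from $\tr g^\HH=3$, and the cross term between $\sigma$ and $g^\HH$ drops out because $\sigma$ is traceless. Each mixed block contributes the same quantity $-\abs{\nabla_U U}^2$ (the sign being forced by $g(U,U)=-1$). Substituting into $-\tfrac12\langle\mathcal{L}_U g,\mathcal{L}_U g\rangle+\tfrac12\langle\dif\varpi,\dif\varpi\rangle$, the acceleration contributions cancel and exactly $2(\abs{\omega}^2-\abs{\sigma}^2)-\tfrac13(\di U)^2$ survives, which combined with the rearranged Yano identity gives \eqref{ray}.

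The step requiring most care is the Lorentzian sign bookkeeping in these norms: because $U$ is timelike every timelike--spatial component is weighted by $g(U,U)=-1$, and the argument hinges on the acceleration entering $\langle\dif\varpi,\dif\varpi\rangle$ and $\langle\mathcal{L}_U g,\mathcal{L}_U g\rangle$ with identical value, so that it cancels in their difference rather than leaving a spurious $\abs{\nabla_U U}^2$. One must likewise pin down the normalisation of $\langle\cdot,\cdot\rangle$ compatibly with \eqref{defkin}, since it is precisely this factor that produces the coefficients $2$ and $-\tfrac13$ in \eqref{ray} instead of $1$ and $-\tfrac23$.
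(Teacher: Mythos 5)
Your proposal is correct and follows exactly the route the paper intends: specializing Yano's identity \eqref{yano} to $X=Y=U$ and evaluating the two quadratic terms via the decomposition \eqref{deco} with the $\tfrac{1}{p!}$-normalized inner product, so that the acceleration contributions cancel and the coefficients $2$ and $-\tfrac{1}{3}$ emerge as stated. You simply supply the frame computation that the paper leaves implicit.
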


In this context, the shear-free fluid conjecture reads: if Equation \eqref{einstein} is satisfied by a fluid with $p=p(\rho)$, $p+\rho \neq 0$ and $\sigma=0$, then $(\di U)\omega=0$. We shall treat the special case described below.
\subsection{Shear-free linear-barotropic fluids} Now consider a shear-free ($\sigma=0$) perfect fluid with \textit{linear equation of state} $p=w\rho$ (which is the $\gamma$-\textit{law} used in cosmology). We can always write it in the form $(U, p = \frac{r - 3}{3}\lambda^{r}, \rho = \lambda^{r})$, where $\lambda$ is some positive function (locally) on $M$ ($\lambda^{-1}$ represents the \textit{length scale}) and $r \in \RR^*$. 

The fluid equations \eqref{euler} become, respectively
\begin{equation}\label{hm1}
- \nabla_U U + (3-r) \gr^{\HH}(\ln \lambda)=0,
\end{equation}
\begin{equation}\label{hm2}
\di U + 3U(\ln \lambda)=0.
\end{equation}
\begin{re} \label{dualit} According to \cite{bry, mo}, \eqref{hm1} and \eqref{hm2} are the necessary and sufficient conditions for $U$ to be tangent to the fibres of a (locally defined) $r$-harmonic morphism $\varphi: (M,g) \to (N, h)$ into some 3-dimensional Riemannian manifold, with dilation $\lambda$. Recall that \cite{lube} an $r$-harmonic morphism $\varphi$ is characterized as a horizontally conformal (i.e. $\varphi^* h  = \lambda^2 g^\HH$) critical point of the action functional $\int_M \abs{\dif \varphi}^r \mathrm{vol}_g$. The corresponding Euler-Lagrange equations are given precisely by \eqref{hm1}, while \eqref{hm2} is an identity fulfilled by any horizontally conformal submersion (for more details see \cite{slo}).
\end{re}

In this case, the Conditions \eqref{couple} for the fluid to be coupled with gravity turn out to be
\begin{equation}\label{couple+}
\begin{split}
\Ric(U,U) &= \tfrac{r-2}{2}\lambda^r , \quad \Ric(X,U) = 0 , \quad \forall X \perp U ,\\
\Ric(X,Y) &= \tfrac{6-r}{6}\lambda^r g(X,Y), \quad \forall X,Y \perp U. 
\end{split}
\end{equation}
and therefore $\Scal = (4-r)\lambda^r$.

A crucial notion in the sequel are basic tensors.
\begin{de} $($\cite{ud}$)$ A section $\varsigma$ of $(\otimes^r \HH) \otimes (\otimes^s \HH^*)$ is called \emph{basic} if $(\mathcal{L}_W \varsigma)^\HH =0$ for all $W \in \Gamma(\VV)$.
\end{de}
In particular, a function $f$ on $M$ is basic if $W(f)=0$, and a horizontal vector field $X$ on $M$  is basic if $[W, X]^\HH=0$, for all $W \in \Gamma(\VV)$;  the latter condition means that $X$ is \textit{projectable} on $N$.

Let us define the \textit{fundamental vector field} of $\VV$ by 
$$V = \lambda^{3-r} U.$$
Analogously to \cite{pan} (see also \cite[p.341]{ud}), we have
\begin{pr} The Euler equation \eqref{hm1} is equivalent to
\begin{equation}\label{basic}
[V, X]=0, 
\end{equation}
for all basic vector fields $X$ on $M$.
\end{pr}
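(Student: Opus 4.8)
The plan is to reduce the vector identity $[V,X]=0$ to its vertical component and then to recognize \eqref{hm1} inside that component. First I would observe that, since $V=\lambda^{3-r}U$ is collinear with $U$ and every basic vector field $X$ is horizontal with $[U,X]^\HH=0$, the Leibniz rule $[V,X]=\lambda^{3-r}[U,X]-X(\lambda^{3-r})U$ shows that the horizontal part $[V,X]^\HH=\lambda^{3-r}[U,X]^\HH$ vanishes automatically. Hence $[V,X]=0$ carries information only in the vertical direction, and the whole statement \eqref{basic} is equivalent to $[V,X]^\VV=0$ for all basic $X$.

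Next I would compute $[V,X]^\VV$ explicitly. Writing $[U,X]=\nabla_U X-\nabla_X U$ and projecting onto $\VV=\mathrm{span}(U)$ through $Z^\VV=-g(U,Z)U$, the two identities I expect to use are $g(U,\nabla_X U)=\tfrac{1}{2}X\big(g(U,U)\big)=0$ (because $U$ is a unit timelike field) and, by differentiating $g(U,X)=0$ along $U$, $g(U,\nabla_U X)=-g(\nabla_U U,X)$. Together these give $[U,X]^\VV=g(\nabla_U U,X)\,U$. Combining this with $X(\lambda^{3-r})=(3-r)\lambda^{3-r}X(\ln\lambda)$ and rewriting $X(\ln\lambda)=g\big(\gr^\HH(\ln\lambda),X\big)$ for horizontal $X$, I expect to arrive at
\[
[V,X]^\VV=\lambda^{3-r}\,g\!\left(\nabla_U U-(3-r)\gr^\HH(\ln\lambda),\,X\right)U.
\]
The bracketed vector is horizontal (both $\nabla_U U$ and $\gr^\HH(\ln\lambda)$ are), and it is precisely the negative of the left-hand side of \eqref{hm1}.

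The forward implication is then immediate: if \eqref{hm1} holds, the argument of $g(\cdot,X)$ vanishes, so $[V,X]^\VV=0$ and hence $[V,X]=0$ for every basic $X$. For the converse I would argue that vanishing of $g\big(\nabla_U U-(3-r)\gr^\HH(\ln\lambda),X\big)$ for all basic $X$ forces the horizontal vector $\nabla_U U-(3-r)\gr^\HH(\ln\lambda)$ to vanish, which is exactly \eqref{hm1}. The main obstacle is this last step: it requires knowing that basic vector fields span $\HH$ pointwise. I would justify it by noting that the one-dimensional foliation $\VV$ is locally a fibration (flow-box theorem, or the local submersion $\varphi$ of Remark \ref{dualit}), so the horizontal lifts of a local frame on the $3$-dimensional base are basic and span $\HH$ at every point; testing the pairing against these lifts then annihilates the horizontal vector and recovers \eqref{hm1}.
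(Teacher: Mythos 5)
Your proof is correct and follows essentially the same route as the paper: the paper likewise notes that $[V,X]$ is vertical for basic $X$ (your horizontal-part computation) and then evaluates $g([V,X],U)=\lambda^{3-r}g\big((3-r)\gr^\HH\ln\lambda-\nabla_U U,\,X\big)$, which is exactly your vertical component up to the identification $Z^\VV=-g(U,Z)U$. The only difference is that you make explicit the spanning of $\HH$ by basic vector fields (via the flow-box theorem) needed for the converse, a point the paper leaves implicit.
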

\begin{proof}
Let $X$ be a basic vector field. Then $[V, X]$ is vertical. Since
$$
g([V, X], U)=\lambda^{3-r}g\big((3-r)\gr^\HH \ln \lambda - \nabla_U U , X\big),
$$
we conclude that $[V, X]=0$ if and only if Euler equation is satisfied.
\end{proof}

Let $\vartheta$ be the 1-form dual to $V$, defined by $\vartheta(X)=-\lambda^{2(r-3)}g(X, V)$ for all $X$, and let $\Omega = \dif \vartheta$.

\begin{lm}\label{basic+}
Let $(U, p = \frac{r - 3}{3}\lambda^{r}, \rho = \lambda^{r})$ be a shear-free perfect fluid. Then

$(i)$ $\Omega(X,Y)=\lambda^{2r-6}g([X,Y],V)$ for all $X, Y \in \Gamma(\HH)$;

$(ii)$ $\Omega=-2\lambda^{r-3}\omega$ ($\Omega$ is \emph{proportional to the rotation rate}); 

$(iii)$ $\imath_W \Omega=0$, for all $W \in \Gamma(\VV)$ ($\Omega$ is a \emph{horizontal} $2$-form) 

$(iv)$ $\Omega=0$ if and only if $\HH$ is integrable \footnote{In other words, for the 4-velocity of a perfect fluid, being hypersurface orthogonal is equivalent with being vorticity-free (i.e. having zero rotation).}; 

$(v)$ $\mathcal{L}_W \Omega=0$ or, equivalently, $W(\Omega(X,Y))=0$, for all $X, Y$ basic vectors and for all $W \in  \Gamma(\VV)$ ($\Omega$ is a \emph{basic} $2$-form).

$(vi)$ \  For any basic vector field $X$, the following function is basic:
$$\lambda^{-2}[\delta \Omega(X)+(r-4)\Omega(X, \gr \ln \lambda)].$$ 
\end{lm}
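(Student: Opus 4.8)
The plan is to split the full codifferential $\delta\Omega$ of the horizontal basic $2$-form $\Omega$ into a contribution along the flow $U$ and a purely horizontal one, to dispose of the former via the Euler equation, and to identify the latter (after rescaling by $\lambda^{-2}$) with the pullback of the codifferential of the induced $2$-form on the base. I would use throughout that, by Lemma~\ref{basic+}$(iii)$ and $(v)$, $\Omega$ is horizontal ($\imath_U\Omega=0$) and basic ($\mathcal{L}_U\Omega=0$), so that it descends to a $2$-form $\check\Omega$ on the $3$-dimensional leaf space $N$ with $\Omega=\varphi^{*}\check\Omega$.

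First I would fix a local orthonormal frame $\{U,E_1,E_2,E_3\}$ adapted to $TM=\VV\oplus\HH$, with $g(U,U)=-1$ and $g(E_i,E_j)=\delta_{ij}$, and write
\begin{equation*}
\delta\Omega(X)=(\nabla_U\Omega)(U,X)+\delta^{\HH}\Omega(X),\qquad \delta^{\HH}\Omega(X):=-\sum_{i=1}^{3}(\nabla_{E_i}\Omega)(E_i,X).
\end{equation*}
For the flow term, horizontality of $\Omega$ kills $U(\Omega(U,X))$ and $\Omega(U,\nabla_U X)$, leaving $(\nabla_U\Omega)(U,X)=-\Omega(\nabla_U U,X)$, which is exactly the correction coming from the geodesic curvature of the $1$-dimensional fibres. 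Substituting the Euler equation \eqref{hm1}, $\nabla_U U=(3-r)\gr^{\HH}\ln\lambda$, and using horizontality once more to pass from $\gr^{\HH}\ln\lambda$ to $\gr\ln\lambda$, I obtain $(\nabla_U\Omega)(U,X)=(3-r)\Omega(X,\gr\ln\lambda)$. Since $(3-r)+(r-4)=-1$, this gives
\begin{equation*}
\delta\Omega(X)+(r-4)\Omega(X,\gr\ln\lambda)=\delta^{\HH}\Omega(X)-\Omega(X,\gr\ln\lambda),
\end{equation*}
and it is exactly this cancellation that pins down the coefficient $r-4$ appearing in the statement.

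It then remains to recognise $\lambda^{-2}\big[\delta^{\HH}\Omega(X)-\Omega(X,\gr\ln\lambda)\big]$ as a pullback from $N$. Expanding $(\nabla_{E_i}\Omega)(E_i,X)$ and using horizontality of $\Omega$, the vertical parts of $\nabla_{E_i}E_i$ and $\nabla_{E_i}X$ (hence all O'Neill integrability contributions) are annihilated, so $\delta^{\HH}\Omega$ reduces to the horizontal codifferential of $\Omega$ relative to $g^{\HH}$. Since $\varphi$ is horizontally conformal, $\varphi^{*}h=\lambda^{2}g^{\HH}$, I would then invoke the conformal transformation rule for the codifferential in dimension $n=3$ on forms of degree $p=2$ (where $n-2p=-1$): the codifferential taken with $\lambda^{2}g^{\HH}$ equals $\lambda^{-2}\big[\delta^{\HH}\Omega+\imath_{\gr\ln\lambda}\Omega\big]$, while $\imath_{\gr\ln\lambda}\Omega(X)=-\Omega(X,\gr\ln\lambda)$. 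This identifies the bracketed function with $\varphi^{*}(\delta_N\check\Omega)$; hence for basic $X$, with $\varphi_{*}X=\check X$, it equals $(\delta_N\check\Omega)(\check X)\circ\varphi$, a pullback of a function on $N$, and is therefore basic.

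The step demanding most care, and the likely main obstacle, is this final identification: fixing the sign convention for $\delta$ from the Appendix in Lorentzian signature, getting the weight $\lambda^{-2}$ and the sign of the $\imath_{\gr\ln\lambda}$ correction exactly right, and confirming that the prospective second-fundamental-form and integrability corrections indeed vanish. A useful internal check is that the coefficient forced by the conformal rule ($n-2p=-1$) must agree with the $-1$ produced independently by the Euler-equation computation of the flow term. An alternative, frame-only route would prove basicness directly by showing $U\big(\lambda^{-2}[\delta^{\HH}\Omega(X)-\Omega(X,\gr\ln\lambda)]\big)=0$; this bypasses $N$ but forces one to commute $U$ past the horizontal covariant derivatives, dragging in curvature that must be reabsorbed through the shear-free condition and the coupling relations \eqref{couple+}, and is appreciably more laborious.
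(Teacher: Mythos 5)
Your proposal addresses only part $(vi)$ --- which is legitimate, since $(i)$--$(v)$ follow immediately from the definitions and the paper itself feeds $(v)$ into the proof of $(vi)$ --- but for $(vi)$ you take a genuinely different route. The paper works in the preferred orthonormal frame $\{U,\lambda X,\lambda Y,\lambda Z\}$ with $\imath_Z\Omega=0$ and computes $\delta\Omega(X)$, $\delta\Omega(Y)$, $\delta\Omega(Z)$ explicitly; the combination $\lambda^{-2}[\delta\Omega(\cdot)+(r-4)\Omega(\cdot,\gr\ln\lambda)]$ then visibly collapses to expressions like $Y(\Omega(X,Y))+\lambda^2 g([Z,Y],Z)\Omega(X,Y)$, which are basic because $\Omega(X,Y)$ is basic and the bracket coefficients $\lambda^2 g([\cdot,\cdot],\cdot)$ are basic. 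You instead split $\delta\Omega$ invariantly into a flow part and a horizontal part, kill the flow part with the Euler equation \eqref{hm1} --- this is exactly what produces the coefficient $r-4$, via $(3-r)+(r-4)=-1$ --- and identify the remainder, through the conformal rescaling rule for codifferentials with $n-2p=-1$, as $\lambda^{2}\varphi^{*}(\delta_N\check\Omega)$, hence manifestly basic. This is consistent with the paper's frame computation: on $(N,h)$ one finds $\delta_N\check\Omega(\check X)=\check Y(\check\Omega(\check X,\check Y))+h([\check Z,\check Y],\check Z)\,\check\Omega(\check X,\check Y)$, whose pullback is precisely the paper's expression. What your approach buys is a conceptual explanation of the otherwise ad hoc combination (it is $\lambda^2$ times the pullback of the codifferential downstairs, i.e.\ an intrinsic meaning for $\beta$), and it works for an arbitrary basic $X$ without invoking the preferred frame; what the paper's approach buys is that every step is an elementary Koszul/frame manipulation with no appeal to quotient constructions. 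The one point you must tighten is the middle identification: since $\lambda$ is not basic, $g^{\HH}$ does \emph{not} descend to $N$ (only $\lambda^2 g^{\HH}=\varphi^{*}h$ does), and $\HH$ is non-integrable, so ``the codifferential relative to $g^{\HH}$'' is not a priori defined; you must define it via the ambient Levi--Civita connection and check that the formal conformal comparison still goes through, using that the Koszul formula on horizontal fields involves only $g^{\HH}$-data and that all vertical contributions (O'Neill terms, vertical parts of brackets) are annihilated by the horizontality of $\Omega$. With that made explicit, your argument is complete and correct.
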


\begin{proof}
$(i)$ and $(ii)$ are immediate using only the definitions. For $(iii)$ we can check that $\Omega(V,X)=0$ for a basic vector $X$ by employing Equation \eqref{basic}. $(iv)$ is the Frobenius' theorem upon applying $(iii)$. $(v)$  is the consequence of Cartan's formula 
$\mathcal{L}_W \Omega=\imath_W\dif \Omega +\dif(\imath_W \Omega)$. 

To prove the last assertion $(vi)$, let $\{X, Y, Z\}$ be orthogonal basic vectors from a preferred frame (see Remark below). By direct computation we obtain
\begin{equation*}
\begin{split}
\delta\Omega(X)&=\lambda^2 \left[ (4-r) Y(\ln \lambda)\Omega(X, Y) + Y(\Omega(X, Y))
+ \lambda^2 g([Z,Y], Z)\Omega(X, Y) \right],\\
\delta\Omega(Y)&=\lambda^2 \left[(4-r) X(\ln \lambda)\Omega(Y, X) - X(\Omega(X, Y))
- \lambda^2 g([Z,X], Z)\Omega(X, Y) \right],\\
\delta\Omega(Z)&=-\lambda^4 g([X,Y], Z)\Omega(X, Y).
\end{split}
\end{equation*}
Since $[X,Y]^\HH$, $[X,Z]$ and $[Y,Z]$ are basic vectors, the functions $\lambda^2 g([X,Y], Z)$, $\lambda^2 g([Z,X], Z)$ and $\lambda^2 g([Z,Y], Z)$ are basic. Using also the fact that $\Omega$ is basic cf. $(v)$, we obtain the conclusion for $X$, $Y$ and $Z$, and therefore for any basic vector.
\end{proof}

\begin{re} [Preferred orthonormal frames, cf. \cite{ud, pan}] \  We can always choose a local orthogonal frame $\{X, Y, Z\}$ of basic horizontal vector fields and we may suppose that their lengths satisfy 
$\abs{X} = \abs{Y} = \abs{Z} = 1/\lambda$ 
and that $Z$ satisfies $\imath_Z\Omega = 0$ 
\footnote{Since $\Omega$ is basic, it locally descends to a 2-form on $N$ that can be seen as a skew-symmetric linear map $TN \to TN$. Since $\dim N$ is odd, such a map is singular.} $($i.e. $Z$ is collinear to the \emph{vorticity vector} $(\ast_{\HH} \omega)^\sharp )$. 
Then the contractions $\imath_X\Omega$ and $\imath_Y\Omega$ are both basic and orthogonal, i.e. $\langle \imath_X\Omega , \imath_Y\Omega \rangle =0$. Since $\dif \Omega(X,Y,Z)=0$, we also have
\begin{equation}\label{lz}
(\mathcal{L}_Z \Omega)(X,Y)=0.
\end{equation}
Moreover, $\abs{\Omega}^2=\lambda^4 \Omega(X,Y)^2$ and
$\abs{\imath_X\Omega}^2=\abs{\imath_Y\Omega}^2=\lambda^2 \Omega(X,Y)^2$.

\noindent We call $\{U, \lambda X, \lambda Y, \lambda Z\}$ a \emph{preferred orthonormal frame}
\footnote{A similar choice of an orthonormal tetrad was done in \cite{ber0}.}.
\end{re}

\section{Constraint equations}

The very existence of a shear-free perfect fluid constrains the spacetime geometry, in particular the Ricci curvature (Proposition \ref{ricci}) that, in addition, must have block-diagonal form \eqref{couple} by the Einstein field equations.
For a shear-free fluid with linear equation of state $p = \tfrac{r - 3}{3}\rho$, with $\rho = \lambda^{r}$ and $r \neq 2$, we see these curvature restrictions as constraints on the second order derivatives of $\ln \lambda$. This will provide compatibility conditions at the level of $3^{rd}$ order derivatives, in the form of polynomial relations in first derivatives only.

In this section we collect the constraint equations that are useful in the proof of the conjecture. Analogously to \cite{pan} (see also \cite[p.343]{ud}) we have
 
\begin{pr}[Ricci curvature restrictions]\label{ricci}
Let $(M,g)$ be a four dimensional spacetime and $\varphi : (M, g) \to (N, h)$ an $r$-harmonic morphism of dilation $\lambda$, into a $3$-dimensional Riemannian manifold. If the fibres are tangent to the timelike unit vector field $U$, then the following identities hold for all horizontal vectors $X,Y$:
\begin{equation}\label{ricphm1}
\Ric(U,U) = (3-r) \Delta \ln \lambda 
+ (6-r) U(U(\ln \lambda)) 
+ 3(r-4) U(\ln \lambda)^2 + \tfrac{\lambda^{2(3 - r)}}{2} \abs{\Omega}^2 , 
\end{equation}
\begin{equation}\label{ricphm2}
\Ric(X,U) =  2 X(U(\ln \lambda)) 
- \tfrac{\lambda^{3-r}}{2} \{ \delta \Omega (X) 
+ 2(3-r)\Omega (X, \gr \ln \lambda)\} ,
\end{equation}
\begin{equation}\label{ricphm3}
\begin{split}
\Ric(X,Y) &= \varphi^* \Ric^N (X, Y) 
+(r-2)\Hess_{\ln \lambda} (X,Y) \\
& - [(r-3)^2 + 1] X (\ln \lambda) Y(\ln \lambda) + \tfrac{\lambda^{2(3-r)}}{2} \langle \imath_X \Omega , \imath_Y \Omega \rangle \\
&+ \big(\Delta \ln \lambda +(r-2)\abs{\gr \ln \lambda}^2 \big)  g(X,Y).
\end{split}
\end{equation}
\end{pr}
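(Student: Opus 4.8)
The plan is to read all three identities off the geometry of the horizontally conformal submersion $\varphi$ with one-dimensional timelike fibres, following the O'Neill-type analysis of \cite{pan} (and \cite[p.343]{ud}) but adapted to Lorentzian signature. The two invariants that control such a submersion are the second fundamental form of the fibres, which for a one-dimensional fibre is carried entirely by the acceleration $\nabla_U U$, and the integrability tensor of $\HH$, which measures $[X,Y]^\VV$ for horizontal $X,Y$. The crucial point is that the fluid equations and Lemma \ref{basic+} convert both of these into the quantities appearing on the right-hand sides: Euler's equation \eqref{hm1} gives $\nabla_U U = (3-r)\gr^\HH\ln\lambda$, while Lemma \ref{basic+}$(i)$ gives $g([X,Y],U) = \lambda^{3-r}\Omega(X,Y)$, so that every occurrence of the integrability tensor becomes an expression in $\Omega$ (hence in $\delta\Omega$, $\imath_X\Omega$ and $\abs{\Omega}^2$), and energy conservation \eqref{hm2} replaces the expansion $\di U$ everywhere by $-3U(\ln\lambda)$.

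I would establish \eqref{ricphm1} first, because it follows directly from material already in hand and fixes all sign conventions. Rearranging the Raychaudhuri equation \eqref{ray} gives $\Ric(U,U) = -\tfrac13(\di U)^2 + 2\abs{\omega}^2 + \di(\nabla_U U) - U(\di U)$ (using $\sigma=0$). Now substitute $\di U = -3U(\ln\lambda)$ from \eqref{hm2}, note $\abs{\omega}^2 = \tfrac14\lambda^{2(3-r)}\abs{\Omega}^2$ from Lemma \ref{basic+}$(ii)$, and compute $\di(\nabla_U U) = (3-r)\di(\gr^\HH\ln\lambda)$ via \eqref{hm1}. Splitting the gradient into horizontal and vertical parts, $\gr\ln\lambda = \gr^\HH\ln\lambda - U(\ln\lambda)U$, yields $\di(\gr^\HH\ln\lambda) = \Delta\ln\lambda + U(\ln\lambda)\di U + U(U(\ln\lambda))$ (with the convention $\Delta = \di\,\gr$). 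Collecting the resulting $\Delta\ln\lambda$, $U(U(\ln\lambda))$ and $U(\ln\lambda)^2$ terms reproduces exactly the coefficients $3-r$, $6-r$ and $3(r-4)$ of \eqref{ricphm1}; this agreement is a useful consistency check before attacking the harder components.

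For the mixed and horizontal components I would invoke the O'Neill/Codazzi curvature identities for the submersion. The component $\Ric(X,U)$ is the momentum (``$0i$'') constraint of the $1+3$ decomposition: it is a contracted Codazzi identity expressing $\Ric(X,U)$ through a horizontal divergence of the integrability tensor plus a gradient of the expansion and an acceleration coupling. With $\sigma=0$, rewriting the integrability tensor via Lemma \ref{basic+} turns the divergence term into $-\tfrac{\lambda^{3-r}}{2}\delta\Omega(X)$ and the acceleration coupling (using \eqref{hm1}) into the $\Omega(X,\gr\ln\lambda)$ term, while $\gr^\HH(\di U)$ with \eqref{hm2} produces $2X(U(\ln\lambda))$; this is \eqref{ricphm2}. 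For $\Ric(X,Y)$ I would use the horizontal O'Neill formula, writing the horizontal Ricci as the pullback $\varphi^*\Ric^N$ corrected by $(a)$ the conformal rescaling $\varphi^*h = \lambda^2 g^\HH$ and $(b)$ the $A$- and $T$-tensor terms. Since $\dim N = 3$, the conformal-change formula for the Ricci tensor supplies precisely the Hessian, gradient-product and trace terms; substituting $\nabla_U U = (3-r)\gr^\HH\ln\lambda$ into the $T$-terms then assembles the coefficients $r-2$, $-[(r-3)^2+1]$ and $\Delta\ln\lambda + (r-2)\abs{\gr\ln\lambda}^2$. The integrability term contributes $+\tfrac{\lambda^{2(3-r)}}{2}\langle\imath_X\Omega,\imath_Y\Omega\rangle$, the plus sign being forced by the timelike (negative-norm) fibre, in contrast to the Riemannian case.

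The main obstacle is the bookkeeping in the horizontal component \eqref{ricphm3}: one must combine the dimension-three conformal-change terms with the contributions of the second fundamental form $\nabla_U U = (3-r)\gr^\HH\ln\lambda$ and keep track of the Lorentzian signs coming from $g(U,U)=-1$, so that the various $X(\ln\lambda)Y(\ln\lambda)$ and $\abs{\gr\ln\lambda}^2 g(X,Y)$ contributions collapse to the stated $r$-dependent coefficients. The verified match of \eqref{ricphm1} against Raychaudhuri, together with the known Riemannian formulas in \cite{pan, ud} (to which the present computation reduces in the spacelike-fibre, $A$-only case), provides the controls needed to pin down these coefficients.
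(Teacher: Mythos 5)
Your treatment of \eqref{ricphm1} is exactly the paper's: rewrite Raychaudhuri \eqref{ray} with $\sigma=0$, substitute $\di U=-3U(\ln\lambda)$ from \eqref{hm2} and $\nabla_U U=(3-r)\gr^\HH\ln\lambda$ from \eqref{hm1}, use $\abs{\omega}^2=\tfrac14\lambda^{2(3-r)}\abs{\Omega}^2$ from Lemma \ref{basic+}$(ii)$, and your coefficient check $(3-r,\ 6-r,\ 3(r-4))$ is correct. For the other two identities your route differs from the paper's in which standard identities are invoked, though the content is the same. For \eqref{ricphm2} the paper derives the identity from Yano's formula \eqref{yano} together with the decomposition \eqref{deco}, and only afterwards remarks that the result is the shear-divergence, or $(0,\alpha)$, constraint of the $1+3$ formalism (citing \cite{seno}); you instead take that contracted Codazzi constraint as your starting point and translate its terms via \eqref{hm1}, \eqref{hm2} and Lemma \ref{basic+}. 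This is legitimate and shorter, at the price of importing the constraint equation as a black box where the paper gives a self-contained derivation. For \eqref{ricphm3} the paper traces the curvature identity for horizontally conformal maps (\cite[p.320]{ud}) and then adjoins the propagation-of-shear equation, whose sole role is to supply the mixed component $R(U,X,U,Y)$ that necessarily enters the trace defining $\Ric(X,Y)$ over a full frame containing $U$; in your O'Neill-style plan this contribution is hidden inside the ``$T$-tensor terms'', and you should make it explicit that the required terms are the \emph{derivative}-of-$T$ (mean-curvature derivative) terms of the O'Neill Ricci formula, not merely the quadratic ones --- tracing only the horizontal curvature identity would omit $R(U,X,U,Y)$ altogether. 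A second caution: O'Neill's formulas are stated for Riemannian submersions, and since the dilation $\lambda$ here is not constant along the fibres (its vertical derivative is precisely the expansion), you cannot reduce to that case by a conformal change of $g$; you need the horizontally conformal generalization, which is exactly the formula the paper quotes from \cite{ud}. With these two points made explicit, your plan assembles the same terms as the paper's proof.
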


\begin{proof}
The first identity is simply another way of writing Raychaudhuri equation \eqref{ray} in the shear-free case, by using Equations \eqref{hm1} and \eqref{hm2}.

The second identity can be derived directly from Yano's formula \eqref{yano}, taking into account the decomposition \eqref{deco}. This formula is known in physics literature as shear-divergence identity or $(0, \alpha)-$equation and it is one of the standard constraint equations, cf. e.g.  \cite{seno}.

The third identity is obtained by taking the trace of the following formula, true for any horizontally conformal map $\varphi:(M,g) \to (N,h)$ and for all horizontal vectors $X, Y, Z, T$  (\cite[p.320]{ud})
\begin{equation*}
\begin{split}
& \langle R(X, Y)Z, T \rangle = \lambda^{-2}\langle R^N(\dif \varphi(X), \dif \varphi(Y))\dif \varphi(Z), \dif \varphi(T))\rangle \\
&- \langle X (\ln \lambda) Y - Y (\ln \lambda) X, T (\ln \lambda) Z - Z(\ln \lambda) T \rangle + \{ \langle Y, Z \rangle \Hess_{\ln \lambda} (X, T) \\
& - \langle X, Z \rangle \Hess_{\ln \lambda} (Y, T) 
+ \langle X, T \rangle \Hess_{\ln \lambda} (Y, Z) 
- \langle Y, T \rangle \Hess_{\ln \lambda} (X, Z) \}\\
& + \tfrac{1}{4} \{\langle I(X, Z), I(Y,T) \rangle - \langle I(Y, Z), I(X,T) \rangle + 2\langle I(X,Y), I(Z,T)\rangle \}\\
&+ (\langle Y,Z \rangle \langle X,T \rangle - \langle X,Z \rangle \langle Y,T \rangle) \abs{\gr \ln \lambda}^2, 
\end{split}
\end{equation*}
where $I(X, Y)=[X, Y]^\VV$, combined with the \textit{propagation of shear}:
\begin{equation*}
\begin{split}
&R(U,X,U,Y)+\tfrac{1}{2}(\mathcal{L}_{\nabla_U U}g)(X,Y)-\tfrac{1}{3}\left[U(\di U)+ \tfrac{1}{3}(\di U)^2\right]g(X,Y) \\
&+ \langle \imath_{X} \omega, \imath_{Y} \omega \rangle + g(\nabla_U U, X)g(\nabla_U U, Y)=0, \quad \forall X,Y \in \HH.
\end{split}
\end{equation*}
\end{proof}

Using the gravity coupling condition \eqref{couple+} and Remark \ref{dualit}, we obtain
\begin{co}
Let $(U, p = \frac{r - 3}{3}\lambda^{r}, \rho = \lambda^{r})$ be a shear-free perfect fluid coupled with gravity on $(M,g)$. Then the identities \eqref{ricphm1}, \eqref{ricphm2} and \eqref{ricphm3} hold with the left-hand side replaced by $\frac{r-2}{2}\lambda^r$, $0$ and $\frac{6-r}{6}\lambda^r g(X,Y)$, respectively, where $N$ is endowed with a metric $h$ such that the projection along $U$, $\varphi: (M, g) \to (N, h)$, satisfies $\varphi^*h=\lambda^2 g^{\HH}$.
\end{co}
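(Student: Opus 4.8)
The plan is to assemble three ingredients that are already available: the identification of the shear-free linear-barotropic fluid with an $r$-harmonic morphism, the general Ricci curvature restrictions that hold for any such morphism, and the block-diagonal form of the Ricci tensor forced by the Einstein equations. Since the hard analytic work is packaged into Proposition \ref{ricci} and Remark \ref{dualit}, the corollary itself should reduce to a substitution.

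First I would produce the morphism. Because $(U, p = \frac{r-3}{3}\lambda^r, \rho = \lambda^r)$ is a shear-free fluid, its fluid equations \eqref{euler} are precisely \eqref{hm1} and \eqref{hm2}. By Remark \ref{dualit} these are the necessary and sufficient conditions for $U$ to be tangent to the fibres of a locally defined $r$-harmonic morphism $\varphi : (M,g) \to (N,h)$ into a $3$-dimensional Riemannian manifold, with dilation $\lambda$; the metric $h$ on $N$ is the one making $\varphi$ horizontally conformal, i.e. $\varphi^*h = \lambda^2 g^{\HH}$. Thus the geometric object required by Proposition \ref{ricci} exists (locally), and its hypotheses are met: $M$ is four-dimensional, $N$ three-dimensional, and the fibres are tangent to the timelike unit $U$.

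Second I would apply Proposition \ref{ricci} to this $\varphi$, so that the identities \eqref{ricphm1}, \eqref{ricphm2} and \eqref{ricphm3} hold with their left-hand sides the genuine Ricci contractions $\Ric(U,U)$, $\Ric(X,U)$ and $\Ric(X,Y)$. Third, I would invoke the coupling with gravity: since the Einstein field equations \eqref{einstein} are satisfied, the Ricci tensor has the block-diagonal form recorded in \eqref{couple+}, namely $\Ric(U,U)=\tfrac{r-2}{2}\lambda^r$, $\Ric(X,U)=0$ for all $X\perp U$, and $\Ric(X,Y)=\tfrac{6-r}{6}\lambda^r g(X,Y)$ for all $X,Y\perp U$. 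Substituting these three values into the left-hand sides of \eqref{ricphm1}--\eqref{ricphm3} gives exactly the asserted equalities.

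I do not expect a genuine obstacle here, as the statement is a bookkeeping consequence of the preceding results. The only point meriting a moment's care is the local existence and the consistency of the target $(N,h)$ with the prescribed horizontal conformality $\varphi^*h=\lambda^2 g^{\HH}$; but this is precisely what Remark \ref{dualit} supplies, and once $\varphi$ is in hand the remainder is pure substitution into \eqref{ricphm1}, \eqref{ricphm2} and \eqref{ricphm3}.
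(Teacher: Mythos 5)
Your proposal is correct and follows exactly the route the paper intends: the paper states this corollary with the one-line justification ``Using the gravity coupling condition \eqref{couple+} and Remark \ref{dualit}, we obtain...'', which is precisely your three-step assembly of Remark \ref{dualit} (local existence of the $r$-harmonic morphism $\varphi$ with $\varphi^* h = \lambda^2 g^{\HH}$), Proposition \ref{ricci}, and the substitution of the Ricci values forced by \eqref{couple+}. Nothing is missing; your extra remark about the local existence of $(N,h)$ is the only point of care and is indeed supplied by the cited Remark.
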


Taking the (horizontal) trace of \eqref{ricphm3} and combining with \eqref{ricphm1} gives:

\begin{co}[Trace constraints] On a spacetime $(M,g)$ with a shear-free perfect fluid 
$(U, p = \frac{r - 3}{3}\lambda^{r}, \rho = \lambda^{r})$ coupled  with gravity, the following identities must hold
\begin{equation}\label{tr1}
\begin{split}
\Delta \ln \lambda = &  \tfrac{r-2}{2} U(\ln \lambda)^2 
- \tfrac{r-4}{3}\lambda^{r} + \tfrac{3r-14}{24}\lambda^{6-2r}\abs{\Omega}^2\\
&+ \tfrac{(2r-5)(r-6)}{6}\abs{\gr^\HH \ln \lambda}^2 + \tfrac{r-6}{12}\lambda^{2}\Scal^N ,
\end{split}
\end{equation}
\begin{equation}\label{tr2}
\begin{split}
U(U( \ln \lambda)) = &  -\tfrac{r-5}{2} U(\ln \lambda)^2 
+ \tfrac{2r-5}{6}\lambda^{r} - \tfrac{3r-5}{24}\lambda^{6-2r}\abs{\Omega}^2\\
& - \tfrac{(2r-5)(r-3)}{6}\abs{\gr^\HH \ln \lambda}^2 - \tfrac{r-3}{12}\lambda^{2}\Scal^N .
\end{split}
\end{equation}
\end{co}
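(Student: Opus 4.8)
The plan is to obtain both identities \eqref{tr1} and \eqref{tr2} by a single linear-algebra manipulation: I would take the horizontal trace of the curvature identity \eqref{ricphm3} (in its gravity-coupled form, so with left-hand side $\tfrac{6-r}{6}\lambda^r g(X,Y)$), reduce each term using the preferred orthonormal frame, and then read the resulting scalar identity together with the gravity-coupled \eqref{ricphm1} as a $2\times 2$ linear system in the two unknowns $\Delta \ln\lambda$ and $U(U(\ln\lambda))$. Solving this system returns exactly the two stated expressions.

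Concretely, I would fix a preferred orthonormal frame $\{U, e_1=\lambda X, e_2=\lambda Y, e_3=\lambda Z\}$ and sum \eqref{ricphm3} over $\{e_1,e_2,e_3\}$. The left-hand side gives $\tfrac{6-r}{6}\lambda^r\cdot 3=\tfrac{6-r}{2}\lambda^r$. The term $\varphi^*\Ric^N$ contributes $\lambda^2\Scal^N$, since $\varphi^*h=\lambda^2 g^\HH$ rescales an $h$-orthonormal coframe on $N$ by $\lambda$. The genuinely delicate bookkeeping sits in three places. For the Hessian, I would use that in the timelike-unit-vector splitting (signature $(-,+,+,+)$) one has $\sum_i\Hess_{\ln\lambda}(e_i,e_i)=\Delta\ln\lambda+\Hess_{\ln\lambda}(U,U)$, and then $\Hess_{\ln\lambda}(U,U)=U(U(\ln\lambda))-(\nabla_U U)(\ln\lambda)$; feeding in the Euler equation \eqref{hm1} as $\nabla_U U=(3-r)\gr^\HH\ln\lambda$ turns this into $U(U(\ln\lambda))-(3-r)\abs{\gr^\HH\ln\lambda}^2$. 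The full gradient splits analogously as $\abs{\gr\ln\lambda}^2=\abs{\gr^\HH\ln\lambda}^2-(U(\ln\lambda))^2$. Finally, for the vorticity contribution I would invoke the preferred-frame identities $\imath_Z\Omega=0$ and $\abs{\imath_X\Omega}^2=\abs{\imath_Y\Omega}^2=\lambda^2\Omega(X,Y)^2$, which give $\sum_i\abs{\imath_{e_i}\Omega}^2=2\lambda^4\Omega(X,Y)^2=2\abs{\Omega}^2$, so this term collapses to $\lambda^{6-2r}\abs{\Omega}^2$.

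Collecting everything produces a scalar relation of the form $(r+1)\Delta\ln\lambda+(r-2)U(U(\ln\lambda))=(\text{terms in }\lambda^r,\ (U\ln\lambda)^2,\ \abs{\gr^\HH\ln\lambda}^2,\ \lambda^{6-2r}\abs{\Omega}^2,\ \lambda^2\Scal^N)$. Together with \eqref{ricphm1}, whose gravity-coupled form reads $(3-r)\Delta\ln\lambda+(6-r)U(U(\ln\lambda))=\tfrac{r-2}{2}\lambda^r-3(r-4)(U\ln\lambda)^2-\tfrac12\lambda^{6-2r}\abs{\Omega}^2$, this is a linear system whose coefficient matrix $\left(\begin{smallmatrix} 3-r & 6-r\\ r+1 & r-2\end{smallmatrix}\right)$ has determinant $-12$. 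Because this determinant is a nonzero constant for every $r$, Cramer's rule solves the system uniquely and yields \eqref{tr1} for $\Delta\ln\lambda$ and \eqref{tr2} for $U(U(\ln\lambda))$, with the stated coefficients appearing as the expected rational combinations (e.g. the $\abs{\gr^\HH\ln\lambda}^2$ coefficient $\tfrac{(2r-5)(r-6)}{6}$ arising from factoring $2r^2-17r+30$).

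I do not expect a real obstacle here, since the invertibility of the system is automatic from the constant determinant. The only care required is the signature-dependent bookkeeping in the Hessian and gradient splittings and the factor $2$ in $\sum_i\abs{\imath_{e_i}\Omega}^2$; once these reductions are carried out correctly, the remainder is a routine inversion of a $2\times 2$ system.
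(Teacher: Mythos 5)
Your proposal is correct and follows exactly the paper's own route: the paper's proof is precisely "taking the horizontal trace of \eqref{ricphm3} and combining with \eqref{ricphm1}," and your detailed bookkeeping (the signature-dependent splittings $\sum_i\Hess_{\ln\lambda}(e_i,e_i)=\Delta\ln\lambda+U(U(\ln\lambda))-(3-r)\abs{\gr^\HH\ln\lambda}^2$ and $\abs{\gr\ln\lambda}^2=\abs{\gr^\HH\ln\lambda}^2-U(\ln\lambda)^2$, the factor $2$ in the vorticity trace, and the pullback trace $\lambda^2\Scal^N$) is accurate, as is the resulting $2\times 2$ system with constant determinant $-12$ whose solution reproduces \eqref{tr1} and \eqref{tr2} with the stated coefficients.
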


\subsection{Constraints in a preferred frame}

Let $\{\lambda X, \lambda Y, \lambda Z, U\}$ be a preferred orthonormal frame. In the following we will obtain some constraints for the existence of a shear-free fluid $(U, p = \frac{r - 3}{3}\lambda^{r}, \rho = \lambda^{r})$ coupled with gravity in terms of these basic vector fields $X$, $Y$ and $Z$.

Condition \eqref{ricphm2} is equivalent to
\begin{equation}\label{ricphm2+}
\begin{split}
X(V(\ln \lambda)) = & 
\tfrac{\lambda^{2(4-r)}}{4} \{ \beta(X) + (10- 3r)Y(\ln \lambda)\Omega (X,Y)\} + (3-r)X(\ln \lambda)V(\ln \lambda),\\
Y(V(\ln \lambda)) = & 
\tfrac{\lambda^{2(4-r)}}{4} \{ \beta(Y) - (10- 3r)X(\ln \lambda)\Omega (X, Y)\} + (3-r)Y(\ln \lambda)V(\ln \lambda),\\
Z(V(\ln \lambda)) = & 
\tfrac{\lambda^{2(4-r)}}{4} \beta(Z) + (3-r)Z(\ln \lambda)V(\ln \lambda),\\
\end{split}
\end{equation}
where $\beta(T):=\lambda^{-2}[\delta \Omega (T) 
+ (r-4)\Omega (T, \gr \ln \lambda)]$ is basic whenever $T$ is a basic vector, according to Lemma \ref{basic+}. 

Assume $r \neq 2$ and let $A(r)=\tfrac{(r-3)^2 + 1}{r-2}$ and $B(r)=1+\tfrac{(2r-5)(r-6)}{6(r-2)}$. Condition \eqref{ricphm3} combined with the trace constraints gives:
\begin{equation}\label{hessxx}
\begin{split}
\Hess_{\ln \lambda}(X,X)= &(A(r)-B(r))X(\ln \lambda)^2 - B(r)[Y(\ln \lambda)^2 + Z(\ln \lambda)^2] + \tfrac{1}{2}\lambda^{-2}U(\ln \lambda)^2\\
& +  \tfrac{1}{6}\lambda^{r-2} -  \tfrac{3r-2}{24(r-2)}\lambda^{8-2r}\Omega(X, Y)^2 \\
&-  \tfrac{1}{r-2} \left(\varphi^*\Ric^N (X,X) + \tfrac{r-6}{12}\Scal^N \right),
\end{split}
\end{equation}
the analogous equation for $\Hess_{\ln \lambda}(Y,Y)$ and
\begin{equation}\label{hesszz}
\begin{split}
\Hess_{\ln \lambda}(Z,Z)= &(A(r)-B(r))Z(\ln \lambda)^2 - B(r)[X(\ln \lambda)^2 + Y(\ln \lambda)^2] + \tfrac{1}{2}\lambda^{-2}U(\ln \lambda)^2\\
& +  \tfrac{1}{6}\lambda^{r-2} -  \tfrac{3r-14}{24(r-2)}\lambda^{8-2r}\Omega(X, Y)^2 \\
&-  \tfrac{1}{r-2} \left(\varphi^*\Ric^N (Z,Z) + \tfrac{r-6}{12}\Scal^N\right).
\end{split}
\end{equation}

Condition \eqref{ricphm3} on pairs of orthogonal vectors gives:
\begin{equation}\label{hessxy}
\Hess_{\ln \lambda} (X,Y) - A(r) X(\ln \lambda)Y(\ln \lambda) + \tfrac{1}{r-2}\varphi^* \Ric^N (X, Y)=0
\end{equation}
and similar equations for $(X,Z)$ and $(Y,Z)$.

\medskip
The previous equations prescribe second order derivatives of $\ln \lambda$ in terms of its first derivatives. Differentiating them  along $V$ (i.e. "propagating") and using commutation \eqref{basic} will provide us with compatibility conditions purely in terms of first derivatives.

By taking the derivative along $V$ of the equation (cf. \eqref{ricphm3})
\begin{equation}\label{hessxxyy}
\begin{split}
& \Hess_{\ln \lambda} (X,X) - A(r) X (\ln \lambda)^2 + \tfrac{1}{r-2}\varphi^* \Ric^N (X, X) \\
&=\Hess_{\ln \lambda} (Y,Y) - A(r) Y (\ln \lambda)^2 + \tfrac{1}{r-2}\varphi^* \Ric^N (Y, Y),
\end{split}
\end{equation}
inserting $X(V(\ln \lambda))$, $Y(V(\ln \lambda))$ from \eqref{ricphm2+} and $X(Y(\ln \lambda))$ from \eqref{hessxy}, and simplifying the result using again \eqref{hessxxyy}, we obtain:
\begin{equation}\label{vhessxxyy}
\begin{split}
&(3-r)V(\ln \lambda)\big\{(5-r-A(r))[X(\ln \lambda)^2 - Y(\ln \lambda)^2] \\
&+ \tfrac{1}{r-2}[\varphi^* \Ric^N (Y, Y)- \varphi^* \Ric^N  (X, X)]\big\} \\
=&-\tfrac{\lambda^{8-2r}}{4}\big\{2(10-3r)(13-3r-A(r))\Omega(X,Y) X(\ln \lambda) Y(\ln \lambda)\\
& + [(15-3r-2A(r))\beta(X) +(10-3r)Y (\Omega(X,Y))]X(\ln \lambda)\\
& + [-(15-3r-2A(r))\beta(Y) +(10-3r)X (\Omega(X,Y))]Y(\ln \lambda)\\
& + (10-3r)\lambda^2 g(\nabla_X Y + \nabla_Y X, Z)\Omega(X,Y)Z(\ln \lambda)\\
& + X(\beta(X)) - Y(\beta(Y))-\tfrac{2(10-3r)}{r-2}\Omega(X,Y)\varphi^* \Ric^N (X, Y)\\
& + \lambda^2 g([X,Y],Y)\beta(X) +\lambda^2 g([X,Y],X)\beta(Y)\\
&+\lambda^2 \left(g([X,Z],X)-g([Y,Z],Y)\right)\beta(Z)\big\}.
\end{split}
\end{equation} 
Analogously, by propagating the following equation (cf. \eqref{ricphm3})
\begin{equation}\label{hessxxzz}
\begin{split}
& \Hess_{\ln \lambda} (X,X) - A(r) X (\ln \lambda)^2 + \tfrac{1}{r-2}\varphi^* \Ric^N (X, X) + \tfrac{1}{2(r-2)}\lambda^{8-2r}\Omega(X,Y)^2\\
&=\Hess_{\ln \lambda} (Z,Z) - A(r) Z (\ln \lambda)^2 + \tfrac{1}{r-2}\varphi^* \Ric^N (Z, Z),
\end{split}
\end{equation}
we obtain
\begin{equation}\label{vhessxxzz}
\begin{split}
&(3-r)V(\ln \lambda)\big\{(5-r-A(r))[X(\ln \lambda)^2 - Z(\ln \lambda)^2] \\
&+ \tfrac{1}{r-2}[\varphi^* \Ric^N (Z, Z)- \varphi^* \Ric^N (X, X)]\big\} \\
&+\tfrac{3r^2 -20r + 40}{2(r-2)} \tfrac{\lambda^{8-2r}}{4} V(\ln \lambda) \Omega(X,Y)^2 \\
=&-\tfrac{\lambda^{8-2r}}{4}\big\{
(10-3r)(13-3r-A(r))\Omega(X,Y) X (\ln \lambda)Y (\ln \lambda)\\
& + [(15-3r-2A(r))\beta(X) +(10-3r)\lambda^2 g([Z,Y],Z) \Omega(X,Y)]X(\ln \lambda)\\
& + (10-3r)[X(\Omega(X,Y)) - \lambda^2 g([Z,X],Z) \Omega(X,Y)]Y(\ln \lambda)\\
& + [-(15-3r-2A(r))\beta(Z) +(10-3r)\lambda^2 g(\nabla_X Y, Z) \Omega(X,Y))]Z(\ln \lambda)\\
& + X(\beta(X)) - Z(\beta(Z))-\tfrac{10-3r}{r-2}\Omega(X,Y)\varphi^* \Ric^N (X, Y)\\
& -\lambda^2 g([Z,X],Z)\beta(X) +\lambda^2 (g([X,Y],X)-g([Z,Y],Z))\beta(Y)\\
&+\lambda^2 g([X,Z],X)\beta(Z)\big\}.
\end{split}
\end{equation}
Notice that, inside the brackets of both \eqref{vhessxxyy} and \eqref{vhessxxzz}, the polynomial expressions in the derivatives of $\ln \lambda$ have basic coefficients.

\medskip
Now we exploit the commutation of the covariant second order horizontal derivatives. Since $[X,Z]$ is basic, insert \eqref{ricphm2+} in 
$$X(Z(V(\ln \lambda)))- Z(X(V(\ln \lambda)))- [X,Z](V(\ln \lambda))=0,$$
then substitute the $2^{nd}$ order derivatives of $\ln \lambda$ by means of \eqref{hessxy}, to obtain
\begin{equation}\label{c1}
\begin{split}
&\tfrac{(r-4)(10-3r)}{r-2}\Omega(X,Y) Y(\ln \lambda)Z(\ln \lambda)   \\
& + \{ (5-r)\beta(Z) - (10- 3r) \Omega (\nabla_X Z, X)\}X(\ln \lambda)  \\
& - (10-3r)\Omega(X,[Z,Y]) Y(\ln \lambda) \\
& + \{ -(5-r)\beta(X) -(10-3r)\lambda^2 g([Z,Y],Z)\Omega(X, Y)\} Z(\ln \lambda)  \\
& + \dif \beta (X,Z) + \tfrac{10-3r}{r-2}\Omega(X,Y)\varphi^* \Ric (Y,Z)=0,
\end{split}
\end{equation}
whose left-hand term is a polynomial expression in $X(\ln \lambda)$, $Y(\ln \lambda)$ and $Z(\ln \lambda)$ with basic coefficients, denoted for simplicity as follows: \\ $b Y(\ln \lambda) Z(\ln \lambda) + b_1 X(\ln \lambda) + b_2 Y(\ln \lambda) + b_3 Z(\ln \lambda) + b_0$. 

\medskip
Similarly, from $Y(Z(V(\ln \lambda)))- Z(Y(V(\ln \lambda)))= [Y,Z](V(\ln \lambda))$ (or simply by permuting $X$ and $Y$ in the above relation) we obtain
\begin{equation}\label{c2}
\begin{split}
& - \tfrac{(r-4)(10-3r)}{r-2}\Omega(X,Y) X(\ln \lambda)Z(\ln \lambda)  \\
& + (10-3r)\Omega([Z,X],Y) X(\ln \lambda)  \\
& + \{ (5-r)\beta(Z) - (10 - 3r) \Omega (\nabla_Y Z, Y)\}Y(\ln \lambda)   \\
& + \{ -(5-r)\beta(Y) + (10-3r)\lambda^2 g([Z,X],Z)\Omega(X, Y)\} Z(\ln \lambda)  \\
& + \dif \beta (Y,Z) - \tfrac{10-3r}{r-2}\Omega(X,Y)\varphi^* \Ric^N (X,Z)=0,
\end{split}
\end{equation}
whose left-hand term has basic coefficients too and is written down as $- b X(\ln \lambda) Z(\ln \lambda) + c_1 X(\ln \lambda) + c_2 Y(\ln \lambda) + c_3 Z(\ln \lambda) + c_0$.

\medskip
Finally, using the fact that $[X,Y]^\HH$ is basic, insert \eqref{ricphm2+} in 
$$
X(Y(V(\ln \lambda)))- Y(X(V(\ln \lambda)))- [X,Y](V(\ln \lambda))=0,
$$
then replace the $2^{nd}$ order derivatives of $\ln \lambda$ by means of \eqref{tr2} and \eqref{hessxx}, to obtain
\begin{equation}\label{c3}
\begin{split}
&\Omega(X,Y)\{C(r)[X(\ln \lambda)^2 + Y(\ln \lambda)^2]+ D(r)Z(\ln \lambda)^2\}\\
&+d_1 X(\ln \lambda)+d_2 Y(\ln \lambda)+d_3 Z(\ln \lambda)+L(\lambda^{r-2}, \lambda^{8-2r})\\
&+5(4-r)\Omega(X,Y) \lambda^{-2}U(\ln \lambda)^2=0,
\end{split}
\end{equation}
where $C(r)=-\tfrac{2(r-4)(5r^2 -26r +30)}{3(r-2)}$, $D(r)=-\tfrac{(r-4)(2r-5)(5r-18)}{3(r-2)}$,  \\
$L(\lambda^{r-2}, \lambda^{8-2r}) = \dif \beta^{\HH}(X,Y)+ \Omega(X,Y)\big\{\tfrac{7r - 20}{3}\lambda^{r-2} -\tfrac{15r^2 - 58 r + 40}{12(r-2)}\lambda^{8-2r}\Omega(X,Y)^2 + \tfrac{10- 3r}{r-2}[\varphi^* \Ric^N (X,X)+\varphi^* \Ric^N (Y,Y)]-\tfrac{(r-4) (5 r-18)}{6(r-2)}\Scal^N \big\}$, \\ $d_1=(5-r)\beta(Y)-(10-3r)X(\Omega(X,Y))$, $d_2=-(5-r)\beta(X)-(10-3r)Y(\Omega(X,Y))$
and $d_3=-(10-3r)Z\left(\Omega(X,Y)\right)$ (notice that the functions $d_i$'s are basic).

\section{Strategy of the proof}
Given a perfect fluid $(U, p = \frac{r - 3}{3}\lambda^{r}, \rho = \lambda^{r})$ (or equivalently a local $r$-harmonic morphism) on a 4-dimensional spacetime $(M,g)$ satisfying \eqref{couple+}, we aim to prove that at some point where $\lambda\neq 0$ and for $r\neq 0$ we have either $U(\ln \lambda)=0$ (no expansion) or $\Omega(X,Y)=0$ (no rotation).
The starting point is the observation that if the conjecture is true, then $X(\ln \lambda)$, $Y(\ln \lambda)$ and $Z(\ln \lambda)$ have to be basic (possibly all equal to zero). Moreover, a converse result holds (Proposition \ref{basprop}) providing us with an equivalent form of the conjecture which turns out to be more tractable. Indeed, Equations \eqref{c1} and \eqref{c2} allow us to express $X(\ln \lambda)$ and $Y(\ln \lambda)$ as rational functions (with basic coefficients) of $Z(\ln \lambda)$,  provided that $D = b^2 Z(\ln \lambda)^2 - b(c_1 - b_2)Z(\ln \lambda) + (b_1 c_2 - b_2 c_1)$ is not zero, and then to produce a polynomial equation with basic coefficients in one variable, $Z(\ln \lambda)$, which will be constrained to be basic (together with $X(\ln \lambda)$ and $Y(\ln \lambda)$). So the proof will split up into two major cases ($D \neq 0$ and  $D =0$) to be treated with independent methods. Nevertheless they have in common the following basic tools that will help us to conclude in each case.

\begin{lm}\label{pol}
If a function $f$ on $M$ satisfies a polynomial equation $\alpha_n f^n + ... + \alpha_1 f + \alpha_0 =0$ where $\alpha_i$ are all basic functions, then either $f$ is basic or $\alpha_i=0$ for all $i$.
\end{lm}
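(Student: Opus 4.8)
The plan is to exploit the defining property of basic coefficients — that $W(\alpha_i)=0$ for every vertical field $W\in\Gamma(\VV)$ — not by differentiating the relation directly (which, as I explain below, stalls), but by restricting it to the integral curves of $W$. First I would fix an arbitrary $W\in\Gamma(\VV)$ and an integral curve $\gamma$ of $W$. Since each coefficient $\alpha_i$ is basic, it is constant along $\gamma$; denote its value there by $c_i\in\RR$. Restricting the hypothesis $\sum_{i=0}^n \alpha_i f^i=0$ to $\gamma$ then yields the \emph{fixed-coefficient} scalar identity $\sum_{i=0}^n c_i\,(f\circ\gamma)(t)^i=0$ for all $t$ in the domain of $\gamma$.

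Next I would split according to whether the $c_i$ vanish. If all $c_i=0$, then all $\alpha_i$ vanish at every point of $\gamma$. Otherwise $P_c(T):=\sum_i c_i T^i$ is a nonzero real polynomial of degree $\le n$, so its root set is finite, hence a discrete closed subset of $\RR$. The identity above says that the continuous map $f\circ\gamma$ takes values only in this finite set; since its domain is a connected interval, its image is connected and therefore reduces to a single point. Consequently $f\circ\gamma$ is constant, i.e. $(f\circ\gamma)'\equiv 0$, which is precisely $W(f)=0$ along $\gamma$.

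Combining the two alternatives yields the pointwise dichotomy: at every $x\in M$, either $\alpha_i(x)=0$ for all $i$, or $W(f)(x)=0$ for all $W\in\Gamma(\VV)$, the latter meaning that $f$ is basic at $x$. On a connected domain on which the coefficients are not all identically zero, the open set where some $\alpha_i\neq0$ is exactly where the second alternative holds, and one concludes that $f$ is basic there; this is the form in which the lemma is applied in the sequel. The main obstacle I would want to avoid is the naive route of differentiating the relation along $W$ to get $W(f)\cdot\sum_{i=1}^n i\,\alpha_i f^{i-1}=0$: this argument breaks down precisely where $f$ meets a multiple root of the coefficient polynomial, since there the second factor also vanishes and one cannot divide out. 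Passing to integral curves and invoking the elementary fact that a continuous function valued in a finite set is locally constant sidesteps this difficulty entirely, and is the crux of the proof.
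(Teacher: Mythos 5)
Your argument is correct — it delivers the same pointwise/local dichotomy that the paper actually uses — but it is genuinely different from the paper's proof, and, notably, the route you dismiss as unworkable is exactly the paper's. The paper differentiates the relation along $V$ (the vertical distribution is one-dimensional, so this suffices), obtaining $V(f)\,\sum_{i\ge 1} i\,\alpha_i f^{i-1}=0$, and iterates: at each step either $V(f)=0$ or an equation of lower degree holds, so after $n$ derivations $\alpha_n=0$, and descending back through the derived equations kills the remaining coefficients. Your objection about multiple roots does not apply, because no pointwise division is ever performed: on the open set $A=\{V(f)\neq 0\}$ the vanishing of the product forces $\sum_{i\ge 1} i\,\alpha_i f^{i-1}=0$ \emph{identically on $A$}, and an identity on an open set can itself be differentiated along $V$; iterating gives $n!\,\alpha_n=0$ on $A$, then back-substitution gives $\alpha_i=0$ on $A$ for every $i$. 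That the cofactor may also vanish at points where $f$ meets a multiple root is irrelevant, since one never needs to isolate $V(f)$ at such points. As for what each approach buys: your argument (a continuous function valued in the finite zero set of a nonzero polynomial on an interval is constant) is more elementary, needs only continuity of $f$ rather than smoothness, and gives at once that $f$ is constant along each whole leaf on which some coefficient is nonzero; the paper's iterated-propagation argument requires differentiating, but it is the same propagation-along-$V$ technique used throughout the paper, and it carries over verbatim to Lemma \ref{pollambda}, where the relation $\sum_i \eta_i \lambda^{p_i}=0$ is an exponential sum rather than a polynomial in a single function, so your finite-root-set step would there need the additional (true, but not free) fact that a nontrivial exponential sum has finitely many zeros. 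Finally, note that both proofs establish only the local statement — all $\alpha_i$ vanish on the open set where $V(f)\neq 0$ — so the caveat in your last paragraph about connected domains applies equally to the paper's formulation of the lemma.
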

\begin{proof}
Iterate the derivative along $V$ of the polynomial equation. At each step we have either $V(f)=0$ or an equation of smaller degree is satisfied. If $f$ is not basic, i.e. $V(f)\neq0$, then after $n$ derivations we obtain $\alpha_n=0$. 
\end{proof}

\begin{lm}\label{pollambda}
If $\lambda \neq 0$ is a solution of the equation 
$$
\sum_{i=1}^{n}\eta_i \lambda^{p_i}=0,
$$
where $\eta_i$'s are basic functions and $\eta_{i_0}\neq 0$, then either $V(\ln \lambda)=0$ or it exists $j\neq i_0$ such that $p_{i_0}=p_j$.
\end{lm}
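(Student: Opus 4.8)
The plan is to mimic the iterated-derivative technique of Lemma \ref{pol}, exploiting the fact that differentiating a monomial $\lambda^{p}$ along $V$ reproduces the same monomial up to the scalar factor $p\,V(\ln\lambda)$. First I would differentiate the hypothesis $\sum_{i=1}^n \eta_i \lambda^{p_i}=0$ along $V$. Since each $\eta_i$ is basic, $V(\eta_i)=0$, while $V(\lambda^{p_i})=p_i\lambda^{p_i}V(\ln\lambda)$ (using $\lambda\neq 0$), so
\[
0=V\Big(\sum_i \eta_i\lambda^{p_i}\Big)=V(\ln\lambda)\sum_i p_i\,\eta_i\lambda^{p_i}.
\]
Assuming $V(\ln\lambda)\neq 0$ (otherwise we are already in the first alternative), I divide by $V(\ln\lambda)$ and iterate: since each further derivative along $V$ multiplies the $i$-th coefficient by $p_i$, a straightforward induction yields
\[
\sum_{i=1}^n p_i^{\,k}\,\eta_i\,\lambda^{p_i}=0\qquad\text{for every }k\geq 0.
\]

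Next I would group the terms according to the value of the exponent. Let $q_1,\dots,q_m$ be the \emph{distinct} numbers occurring among $p_1,\dots,p_n$, and for each $a$ set $S_a=\sum_{i:\,p_i=q_a}\eta_i$ (again basic) and $T_a=S_a\lambda^{q_a}$. The relations above then read $\sum_{a=1}^m q_a^{\,k}\,T_a=0$ for $k=0,1,\dots,m-1$. The coefficient matrix $(q_a^{\,k})$ is a Vandermonde matrix with distinct nodes $q_1,\dots,q_m$, hence invertible, so $T_a=0$ for all $a$. Because $\lambda\neq 0$, this forces $S_a=0$: the coefficients sharing a common exponent always sum to zero.

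Finally I would specialize to the exponent value equal to $p_{i_0}$. If no index $j\neq i_0$ satisfies $p_j=p_{i_0}$, then the corresponding group is the singleton $\{i_0\}$, whence $S_a=\eta_{i_0}=0$, contradicting the hypothesis $\eta_{i_0}\neq 0$. Therefore either $V(\ln\lambda)=0$ or some $j\neq i_0$ has $p_j=p_{i_0}$, which is the asserted dichotomy. The proof is short and its only delicate point is the division by $V(\ln\lambda)$: this is legitimate precisely on the open set where $V(\ln\lambda)\neq 0$, i.e. on the complement of the first alternative, so the Vandermonde step and its conclusion are valid exactly where they are needed. The bookkeeping of repeated exponents via the grouping $S_a$ is what makes the argument go through when the $p_i$ are not all distinct, and is the step I expect to require the most care.
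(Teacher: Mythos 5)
Your proof is correct, and it reaches the conclusion by a different elimination mechanism than the paper. The paper also iterates $V$-derivatives, but interleaves them with divisions by powers of $\lambda$: at each of $n-1$ steps one divides the current equation by $\lambda^{p_j}$ for some $j\neq i_0$, differentiates along $V$, and divides by $V(\ln\lambda)$; this multiplies the $i$-th coefficient by $(p_i-p_j)$ and thus annihilates outright every term whose exponent equals $p_j$. After running over all $j\neq i_0$ only the $i_0$-term survives, and one reads off $\eta_{i_0}\prod_{j\neq i_0}(p_{i_0}-p_j)=0$, from which the dichotomy is immediate; note that this telescoping tolerates repeated exponents automatically, since any coincidence among the $p_i$ only produces extra zero factors. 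Your route instead keeps the raw derivative tower $\sum_i p_i^{\,k}\eta_i\lambda^{p_i}=0$, $k\geq 0$, and then genuinely needs the grouping by distinct exponents so that the Vandermonde matrix has distinct nodes and is invertible --- you are right that this is the step requiring care. What your version buys is a stronger intermediate statement: on the set where $V(\ln\lambda)\neq 0$, \emph{every} group sum $S_a$ vanishes, not merely the one containing $i_0$; this is the natural ``linear independence of the monomials $\lambda^{q_a}$ over basic functions'' phrasing, and it could be reused elsewhere in the paper where several coefficients must be forced to vanish simultaneously. What the paper's version buys is brevity: the factors $(p_i-p_j)$ do the bookkeeping of repeated exponents for free. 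Both proofs share the same delicate point, which you flag correctly: the division by $V(\ln\lambda)$ is legitimate exactly on the open set where it is nonzero, which is the complement of the first alternative.
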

\begin{proof}
Suppose $V(\ln \lambda)\neq 0$. Take $(n-1)$ times the $V$-derivative of the given equation and simplify it by $V(\ln \lambda)$ and $\lambda^{p_j}$ for every $j\neq i_0$. The resulted equation is $\eta_{i_0} \prod_{j\neq i_0}(p_{i_0}-p_j) =0$ and the conclusion follows. 
\end{proof}

Let $\mathcal{R}_0=\{2, 3,4, \tfrac{10}{3}\}$. Recall that the conjecture is true for $r \in \mathcal{R}_0$. 

\begin{pr}\label{basprop}
Let $(U, p = \frac{r - 3}{3}\lambda^{r}, \rho = \lambda^{r})$ be a shear-free perfect fluid coupled with gravity on $(M,g)$ and let $\{X, Y, Z \}$ be horizontal vectors from a preferred orthonormal frame at a point of $M$.

$(i)$ \ If $Z(\ln \lambda)$ is basic and $Z(\ln \lambda)\neq 0$, then either the expansion or the rotation of the fluid vanishes.

$(ii)$ \ If $X(\ln \lambda)$ and $Y(\ln \lambda)$ are basic and $X(\ln \lambda)^2 + Y(\ln \lambda)^2 \neq 0$, then either the expansion or the rotation of the fluid vanishes.
\end{pr}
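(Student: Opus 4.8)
The plan is to reduce the claimed dichotomy to the single vanishing $V(\ln\lambda)\,\Omega(X,Y)=0$. By \eqref{hm2} the expansion is $\di U=-3U(\ln\lambda)$, so it vanishes exactly when $V(\ln\lambda)=0$ (recall $V=\lambda^{3-r}U$ and $\lambda\neq0$), while by Lemma \ref{basic+}$(ii)$ the rotation vanishes exactly when $\Omega(X,Y)=0$, the only surviving component of $\Omega$ in a preferred frame. Both parts share the same backbone. In case $(i)$, since $[V,Z]=0$ for the basic field $Z$, the hypothesis that $Z(\ln\lambda)$ is basic reads $Z(V(\ln\lambda))=0$; feeding this into the third line of \eqref{ricphm2+} and using the preferred-frame relation $\imath_Z\Omega=0$ (so that $\beta(Z)=-\lambda^2 g([X,Y],Z)\Omega(X,Y)$ is basic by Lemma \ref{basic+}) gives, as $Z(\ln\lambda)\neq0$ and $r\neq3$,
$$V(\ln\lambda)=\kappa\,\lambda^{8-2r},\qquad \kappa:=-\frac{\beta(Z)}{4(3-r)Z(\ln\lambda)}\ \text{basic}.$$
In case $(ii)$ the same conclusion follows from the first two lines of \eqref{ricphm2+} with $V(X(\ln\lambda))=V(Y(\ln\lambda))=0$: multiplying them by $X(\ln\lambda)$ and $Y(\ln\lambda)$ and adding, the $\Omega$-terms cancel and one solves for $V(\ln\lambda)=\kappa\lambda^{8-2r}$ with $\kappa$ basic, using $X(\ln\lambda)^2+Y(\ln\lambda)^2\neq0$. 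In either case $U(\ln\lambda)=\kappa\lambda^{5-r}$ with $U(\kappa)=0$, hence $U(U(\ln\lambda))=(5-r)U(\ln\lambda)^2$.

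From now on assume the expansion does not vanish, i.e. $\kappa\neq0$, and aim at $\Omega(X,Y)=0$. Inserting $U(U(\ln\lambda))=(5-r)U(\ln\lambda)^2$ into the trace constraint \eqref{tr2} and using $|\Omega|^2=\lambda^4\Omega(X,Y)^2$ produces the scalar relation
$$\Big[\tfrac{5-r}{2}\kappa^2+\tfrac{3r-5}{24}\Omega(X,Y)^2\Big]\lambda^{10-2r}-\tfrac{2r-5}{6}\lambda^{r}+\tfrac{(2r-5)(r-3)}{6}\,|\gr^\HH\ln\lambda|^2+\tfrac{r-3}{12}\Scal^N\lambda^{2}=0.$$
The coefficients of $\lambda^{10-2r}$ and $\lambda^{r}$ are basic, and the three exponents $\{10-2r,\,r,\,2\}$ are pairwise distinct precisely because $r\notin\mathcal{R}_0$ (note $|\gr^\HH\ln\lambda|^2=\lambda^2(X(\ln\lambda)^2+Y(\ln\lambda)^2+Z(\ln\lambda)^2)$ is of $\lambda^2$-order). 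Were the $\lambda^2$-coefficient also basic, Lemma \ref{pollambda} would finish immediately: for $r\neq\tfrac52$ the nonzero basic coefficient $-\tfrac{2r-5}{6}$ would force $V(\ln\lambda)=0$, contradicting $\kappa\neq0$, and the residual degenerate case would force $\kappa=\Omega(X,Y)=0$; either way the conjecture holds.

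The hard part is exactly that the $\lambda^2$-coefficient is not basic, since the sum of squares in $|\gr^\HH\ln\lambda|^2$ contains the horizontal derivatives the hypothesis leaves uncontrolled ($X(\ln\lambda),Y(\ln\lambda)$ in $(i)$, and $Z(\ln\lambda)$ in $(ii)$). The plan to remove them is to first solve the displayed relation for $|\gr^\HH\ln\lambda|^2$, exhibiting the offending sum of squares as a basic-coefficient combination of $1,\lambda^{8-2r},\lambda^{r-2}$; then to propagate along $V$, using the propagation formulas \eqref{ricphm2+} together with $V(\ln\lambda)=\kappa\lambda^{8-2r}$, and to match $\lambda$-powers so as to express the remaining linear combinations of the uncontrolled derivatives (such as $X(\ln\lambda)\beta(X)+Y(\ln\lambda)\beta(Y)$ and $Y(\ln\lambda)\beta(X)-X(\ln\lambda)\beta(Y)$ in $(i)$, or $Z(\ln\lambda)\beta(Z)$ in $(ii)$) again as basic-coefficient functions of $\lambda$. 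Feeding in the commutation relations \eqref{c1}, \eqref{c2} and \eqref{c3}, whose coefficients are all basic, one decouples the system and pins each uncontrolled derivative to a basic-coefficient $\lambda$-expression; substituting back yields a single equation $\sum_i\eta_i\lambda^{p_i}=0$ with basic $\eta_i$, to which Lemma \ref{pollambda} (for the $\lambda$-powers) and Lemma \ref{pol} (for any residual non-basic scalar) apply to force $\Omega(X,Y)=0$. I expect the genuine obstacle to be this elimination: one must verify that the extra exponents generated by propagation ($16-4r$, $6-r$, $r-2,\dots$) do not collide and that the relevant basic coefficients (e.g. $\beta(X)^2+\beta(Y)^2$ in $(i)$, and $\beta(Z)$ or $d_3$ in $(ii)$) are nonzero, disposing of the degenerate configurations separately; it is exactly such collisions and coefficient degeneracies at special $r$ that demand extra care.
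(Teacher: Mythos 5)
Your first half faithfully reproduces the paper's own opening moves: the reduction to $V(\ln\lambda)\,\Omega(X,Y)=0$, the identity $V(\ln\lambda)=\kappa\,\lambda^{8-2r}$ with $\kappa$ basic (obtained from \eqref{ricphm2+} together with $Z(V(\ln\lambda))=0$, resp. $X(V(\ln\lambda))=Y(V(\ln\lambda))=0$), hence $U(U(\ln\lambda))=(5-r)\,U(\ln\lambda)^2$, and the substitution into the trace constraint \eqref{tr2}. You also correctly identify the obstruction: the $\lambda^2$-coefficient contains $X(\ln\lambda)^2+Y(\ln\lambda)^2+Z(\ln\lambda)^2$, which is not yet known to be basic. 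But at exactly this point the proposal stops being a proof. The ``plan'' of propagating the scalar relation along $V$ and feeding in \eqref{c1}, \eqref{c2}, \eqref{c3} is never executed, and you yourself flag the possible exponent collisions and coefficient degeneracies as unresolved. Making those uncontrolled derivatives basic \emph{is} the entire content of the proposition; what you have is a correct setup plus a conjecture that some elimination scheme will close it.

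The idea you are missing is more direct and does not involve \eqref{c1}--\eqref{c3} at all: differentiate the derived identity $V(\ln\lambda)=\tfrac{\lambda^{8-2r}}{4}f$ (with $f$ basic) along the \emph{basic horizontal} fields and compare the result once more with \eqref{ricphm2+}. In case $(i)$ this produces the linear system \eqref{zbasx}--\eqref{zbasy} in $X(\ln\lambda)$, $Y(\ln\lambda)$ with basic coefficients, whose determinant
$\Omega(X,Y)^2\bigl((10-3r)^2+(5-r)^2\lambda^4 g([X,Y],Z)^2/((3-r)^2 Z(\ln\lambda)^2)\bigr)$
is nonzero whenever $\Omega(X,Y)\neq 0$ and $r\neq\tfrac{10}{3}$; hence $X(\ln\lambda)$ and $Y(\ln\lambda)$ are themselves basic, the $\lambda^2$-coefficient in your displayed equation becomes basic, and Lemma \ref{pollambda} finishes the argument exactly as you anticipated. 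In case $(ii)$ the same differentiation along $Z$ gives $(5-r)fZ(\ln\lambda)=\beta(Z)-Z(f)$, so $Z(\ln\lambda)$ is basic provided $r\neq 5$; the exceptional value $r=5$ requires a separate contradiction argument (via \eqref{hessxy}, \eqref{vhessxxzz} and Lemma \ref{pol}) which your proposal does not touch. So the gap is twofold: the key mechanism that forces the remaining horizontal derivatives to be basic is absent (replaced by an unverified sketch pointing in a different, and much heavier, direction), and the $r=5$ degeneracy in part $(ii)$ is not addressed at all.
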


\begin{proof}
Since $r \in \mathcal{R}_0$ are already settled, we may assume that $r \notin \mathcal{R}_0$.

$(i)$ \ By hypothesis $Z(\ln \lambda)$ is basic, so $Z(V(\ln \lambda)) =0$ due to \eqref{basic}. Equation \eqref{ricphm2+} becomes:
\begin{equation}\label{zbas}
0 = \tfrac{\lambda^{2(4-r)}}{4} \beta(Z) + (3-r)Z(\ln \lambda)V(\ln \lambda).
\end{equation}

Since $Z(\ln \lambda)\neq 0$ (by hypothesis), we have 
\begin{equation}\label{vlbas}
V(\ln \lambda)=-\frac{\lambda^{2(4-r)}}{4}\frac{\beta(Z)}{(3-r)Z(\ln \lambda)}.
\end{equation}
Differentiating this equation along the vector $X$, implies:
$$
X(V(\ln \lambda))=\tfrac{\lambda^{2(4-r)}}{4}((8-2r)f X(\ln \lambda) + X(f))
$$
where the function $f =-\frac{\beta(Z)}{(3-r)Z(\ln \lambda)}$ is basic. Inserting $X(V(\ln \lambda))$ from \eqref{ricphm2+} in the above equality gives us an equation with basic coefficients in $X(\ln \lambda)$ and $Y(\ln \lambda)$:
\begin{equation}\label{zbasx}
(5-r)f X(\ln \lambda) - (10-3r)\Omega(X,Y)Y(\ln \lambda) =  \beta(X) - X(f).
\end{equation}
Analogously we obtain
\begin{equation}\label{zbasy}
(10-3r)\Omega(X,Y)X(\ln \lambda) + (5-r)f Y(\ln \lambda) =  \beta(Y) - Y(f).
\end{equation}
Since $\beta(Z)=-\lambda^2 \Omega(X,Y)g([X,Y],Z)$ the discriminant of the linear system formed by \eqref{zbasx} and \eqref{zbasy} is
$$\Delta = \Omega(X,Y)^2\left((10-3r)^2+ (5-r)^2\frac{\lambda^4 g([X,Y],Z)^2}{(3-r)^2 Z(\ln \lambda)^2}\right).$$

Let us suppose that $\Omega(X,Y) \neq 0$. Then $\Delta\neq 0$ (since $r\neq \tfrac{10}{3}$) and $X(\ln \lambda)$ and $Y(\ln \lambda)$ are basic functions as solutions of a linear system with basic coefficients.  

By using \eqref{vlbas}, the trace constraint \eqref{tr2} becomes
\begin{equation*}
\begin{split}
&\lambda^{8-2r} \left(\frac{r-5}{32(3-r)^2}\frac{\beta(Z)^2}{Z(\ln \lambda)^2} - \frac{3r-5}{24}\Omega(X,Y)^2\right) + \frac{2r-5}{6}\lambda^{r-2} \\
&-\frac{(2r-5)(r-3)}{6}\left(X(\ln \lambda)^2 + Y(\ln \lambda)^2 + Z(\ln \lambda)^2\right)- \frac{r-3}{12}\Scal^N=0,
\end{split}
\end{equation*}
that is a linear equation in $\lambda^{8-2r}$ and $\lambda^{r-2}$  with basic coefficients. Since the first two coefficients cannot cancel simultaneously, by applying Lemma \ref{pollambda} we conclude that $U(\ln \lambda)=0$.

\medskip
$(ii)$ \ Supposing $X(\ln \lambda)\neq 0$, from \eqref{ricphm2+} we have 
\begin{equation}\label{xybas}
V(\ln \lambda)=-\frac{\lambda^{2(4-r)}}{4}\frac{\beta(X)+(10-3r)Y(\ln \lambda)\Omega(X, Y)}{(3-r)X(\ln \lambda)}.
\end{equation}
The function $f= -\tfrac{\beta(X)+(10-3r)Y(\ln \lambda)\Omega(X, Y)}{(3-r) X(\ln \lambda)}$ is basic according to our hypothesis. Suppose $f \neq 0$ (otherwise $V(\ln \lambda)=0$ and the proof ends). Analogously to the previous case $(i)$, by differentiating \eqref{xybas} along the vector $Z$  and inserting $Z(V(\ln \lambda))$ from \eqref{ricphm2+} we obtain
$$
(5-r)f Z(\ln \lambda) =  \beta(Z) - Z(f).
$$
If $r\neq 5$, then $Z(\ln \lambda)$ is basic. Inserting $U(\ln \lambda)=\tfrac{\lambda^{5-r}}{4}f$ in the trace constraint \eqref{tr2}, we obtain again an equation of the form $\lambda^{8-2r} \cdot \text{basic} + \tfrac{2r-5}{6}\lambda^{r-2} + \text{basic}=0$; then, like in the case $(i)$, we obtain $U(\ln \lambda)=0$. 

If $r = 5$, let us suppose that $Z(\ln \lambda)$ is not basic and show that this leads to a contradiction. From \eqref{hessxy} we deduce that $g(\nabla_X Y, Z)=0$, which inserted in \eqref{vhessxxzz} together with \eqref{xybas} gives us: $-f Z(\ln \lambda)^2+\tfrac{3\lambda^{-2}}{16}f \Omega(X,Y)^2 + \beta(Z) Z(\ln \lambda) + \text{basic term}=0$. Derive along $V$ this equation (and reinsert it into the result) to obtain \\
$f^2 Z(\ln \lambda)^2- \beta(Z)f Z(\ln \lambda) + \text{basic term}=0$. According to Lemma \ref{pol} this implies $f=0$, contradiction.
\end{proof}

\section{The case $D \neq 0$}

Recall that $D = b^2 Z(\ln \lambda)^2 - b(c_1 - b_2)Z(\ln \lambda) + (b_1 c_2 - b_2 c_1)$. In this case the key observation is that $X(\ln \lambda)$, $Y(\ln \lambda)$ are rational functions with basic coefficients of $Z(\ln \lambda)\neq 0$. This allows, upon propagation of \eqref{c1} and \eqref{c2}, to obtain a polynomial equation in $Z(\ln \lambda)$ that leads us via Lemma \ref{pol} to the conclusion that $Z(\ln \lambda)$ is basic and the conjecture is true according to Proposition \ref{basprop}. We mention that the proof makes use of the previously known fact \cite{coll} that the conjecture holds in the  special case of aligned vorticity and acceleration, i.e. $X(\ln \lambda)=Y(\ln \lambda)=0$ (see \cite{seno} for a covariant proof).

Assume $\Omega \neq 0$, $D \neq 0$ and $r \notin \mathcal{R}_0$. From \eqref{c1} and \eqref{c2} we obtain 
\begin{equation}\label{xyofz}
\begin{split}
X(\ln \lambda) &= \tfrac{1}{D}\left(b c_3 Z(\ln \lambda)^2
+ (b_2 c_3 - b_3 c_2 + b c_0) Z(\ln \lambda) + b_2 c_0 - b_0 c_2\right),\\
Y(\ln \lambda) &= \tfrac{1}{D}\left(-b b_3 Z(\ln \lambda)^2
+ (b_3 c_1 - b_1 c_3 - b b_0) Z(\ln \lambda) + b_0 c_1 - b_1 c_0\right).
\end{split}
\end{equation}
Notice that we can also assume $Z(\ln \lambda)\neq 0$; otherwise from  \eqref{xyofz} we deduce that $X(\ln \lambda)$, $Y(\ln \lambda)$ are both basic and either Proposition \ref{basprop} applies (if one of them is not zero) or $\gr \ln \lambda \in \VV$ (if they are both zero) and therefore $U$ is irrotational (if $\lambda \neq$ constant), contradicting the assumption $\Omega\neq 0$, or $U(\ln \lambda)= 0$ (if $\lambda =$ constant).

By differentiating Equation \eqref{c1} along $V$, inserting second order derivatives from \eqref{ricphm2+} and simplifying the result by means of \eqref{c1} and \eqref{c2}, we obtain
\begin{equation}\label{propaXZ}
\begin{split}
& (3-r) V (\ln \lambda)\left\{-b_1 X(\ln \lambda) - b_2 Y(\ln \lambda) -b_3 Z(\ln \lambda) -2 b_0\right\}\\
&=\tfrac{\lambda^{8-2r}}{4}\big\{ (10-3r)\Omega(X,Y)(c_1 + b_2)X(\ln \lambda)\\
& \qquad + [(10-3r)\Omega(X,Y)(c_2 - b_1) - b \beta(Z)]Y(\ln \lambda) \\
& \qquad +[(10-3r)\Omega(X,Y) c_3 - b \beta(Y)]Z(\ln \lambda) \\
& \qquad + (10-3r)\Omega(X,Y)c_0 - b_1 \beta(X)- b_2 \beta(Y)-b_3 \beta(Z)\big\},
\end{split}
\end{equation}
while differentiating \eqref{c2} along $V$ gives us
\begin{equation}\label{propaYZ}
\begin{split}
& (3-r) V (\ln \lambda)\left\{-c_1 X(\ln \lambda) -c_2 Y(\ln \lambda) -c_3 Z(\ln \lambda) - 2 c_0 \right\}\\
&=\tfrac{\lambda^{8-2r}}{4}\big\{ [(10-3r)\Omega(X,Y)(c_2 - b_1)+ b \beta(Z)]X(\ln \lambda)\\
& \qquad -(10-3r)\Omega(X,Y)(c_1 + b_2)Y(\ln \lambda) \\
& \qquad -[(10-3r)\Omega(X,Y) b_3 - b \beta(X)]Z(\ln \lambda) \\
& \qquad- (10-3r)\Omega(X,Y)b_0 - c_1 \beta(X) - c_2 \beta(Y) - c_3 \beta(Z)\big\}.
\end{split}
\end{equation}

Eliminating  $V (\ln \lambda)$ and $\lambda^{8-2r}$ from Equations \eqref{propaXZ} and \eqref{propaYZ} and inserting  $X(\ln \lambda)$ and  $Y (\ln \lambda)$ from \eqref{xyofz} gives us a $6^{th}$ degree polynomial equation with basic coefficients in $Z(\ln \lambda)$: $\mathcal{P}\big(Z(\ln \lambda)\big)=0$ (see Appendix for the explicit form). According to Lemma \ref{pol}, either $Z(\ln \lambda)$ is basic or the coefficients of $\mathcal{P}$ are all vanishing. In the former case Proposition \ref{basprop} applies and the conjecture is true, while in the latter we shall obtain a contradiction (except for three values of $r$). 

Let us suppose that $Z(\ln \lambda)$ is not basic and that the coefficients of $\mathcal{P}$ are all zero, in particular the leading one:
\begin{equation}\label{leadi}
b_{3}^{2}+ c_{3}^{2} - \tfrac{r-4}{r-2}\left(b_3 \beta(X) + c_3 \beta(Y)\right)=0.
\end{equation}

\begin{lm} \label{bc}
If $D \neq 0$, $\Omega \neq 0$ and $Z(\ln \lambda)$ is not basic, then $b_2 + c_1 = 0$ and $b_1 - c_2=0$.
\end{lm}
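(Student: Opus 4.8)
The plan is to promote the hypotheses to a polynomial identity and then read the two relations off its coefficients. First I observe that if $Z(\ln\lambda)$ is not basic then $\lambda$ is not basic either, so $V(\ln\lambda)\neq0$, and since $\lambda\neq0$ we also have $\lambda^{8-2r}\neq0$. By Lemma~\ref{pol}, the fact that $\mathcal{P}\big(Z(\ln\lambda)\big)=0$ with basic coefficients forces \emph{every} coefficient of $\mathcal{P}$ to vanish; the leading one is \eqref{leadi}, and the lower ones are what I intend to exploit.

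The decisive simplification is to feed the constraint equations \eqref{c1} and \eqref{c2} back into the left-hand sides of \eqref{propaXZ} and \eqref{propaYZ}. Abbreviating $X,Y,Z$ for $X(\ln\lambda),Y(\ln\lambda),Z(\ln\lambda)$, equation \eqref{c1} gives $b_1X+b_2Y+b_3Z=-bYZ-b_0$ and \eqref{c2} gives $c_1X+c_2Y+c_3Z=bXZ-c_0$, so the two brackets multiplying $(3-r)V(\ln\lambda)$ collapse to $bYZ-b_0$ and $-bXZ-c_0$ respectively. Eliminating the nonzero factors $V(\ln\lambda)$ and $\lambda^{8-2r}$ from \eqref{propaXZ}--\eqref{propaYZ} then leaves the single identity $(bXZ+c_0)R_1+(bYZ-b_0)R_2=0$, which after inserting \eqref{xyofz} is exactly $\mathcal{P}=0$.

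Next I isolate $P:=b_2+c_1$ and $Q:=c_2-b_1$. In $R_1$ and $R_2$ these combinations enter only through a spin-$2$ pattern: a short computation shows that the coefficient of $X^2-Y^2$ in $XR_1+YR_2$ is $\mu P$ and the coefficient of $2XY$ is $\mu Q$, where $\mu:=(10-3r)\Omega(X,Y)$; equivalently $\mu P$ and $\mu Q$ are the real and imaginary parts of $\mu(P-\ii Q)(X+\ii Y)^2$. Consequently $\mathcal{P}$ splits as $\mu P\,\Phi_P+\mu Q\,\Phi_Q+\mathcal{P}_0$, where $\Phi_P,\Phi_Q$ are explicit degree-$5$ polynomials in $Z$ with leading coefficients $b^3(c_3^2-b_3^2)$ and $-2b^3b_3c_3$, while $\mathcal{P}_0$ (the part free of $P,Q$) carries the degree-$6$ term annihilated by \eqref{leadi}. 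The engine of the argument is the claim that $\mathcal{P}_0$ vanishes identically: it is the obstruction that survives when the spin-$2$ term is switched off, and it should be killed by the trace constraint \eqref{tr2} together with the remaining commutator identity \eqref{c3}. Granting this, $\mu P\,\Phi_P+\mu Q\,\Phi_Q\equiv0$, and comparing the coefficients of $Z^5$ and $Z^4$ gives a $2\times2$ linear system in $(P,Q)$ whose leading row is $(\mathrm{Re},\mathrm{Im})$ of $(c_3-\ii b_3)^2$; since $\Omega\neq0$, $D\neq0$ and $r\notin\mathcal{R}_0$ keep this system nondegenerate, it forces $P=Q=0$, which is the assertion.

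The main obstacle is precisely the identity $\mathcal{P}_0\equiv0$, together with the nondegeneracy of the resulting $(P,Q)$-system; indeed, were $\mathcal{P}_0$ nonzero then $\mu P\,\Phi_P+\mu Q\,\Phi_Q=-\mathcal{P}_0$ could not have the trivial solution, so the whole lemma hinges on this cancellation. Since the lower coefficients of $\mathcal{P}_0$ are not manifestly zero, extracting two \emph{independent} conditions on $(P,Q)$ — rather than a single linear relation — is delicate, and this is where the explicit rational forms \eqref{xyofz} and, if necessary, one further propagated constraint (the $V$-derivative of \eqref{c3} or Equation \eqref{vhessxxzz}) must be brought in. Once $P=Q=0$ is secured, the identities $b_2+c_1=0$ and $b_1-c_2=0$ follow immediately.
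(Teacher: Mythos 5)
There is a genuine gap at the heart of your argument, and you name it yourself: the identity $\mathcal{P}_0 \equiv 0$. Everything downstream (the decomposition $\mathcal{P} = \mu P\,\Phi_P + \mu Q\,\Phi_Q + \mathcal{P}_0$ and the $2\times 2$ system in $(P,Q)$) is conditional on this cancellation, and you offer only the hope that it ``should be killed'' by \eqref{tr2} and \eqref{c3}. There is no reason to expect this: the coefficients of $\mathcal{P}$ are not algebraic identities but genuine constraints, forced to vanish only through Lemma \ref{pol}, and each of them mixes $P$, $Q$ with terms free of $P$ and $Q$. Indeed, the paper's subsequent analysis treats the vanishing of these coefficients as substantive conditions (see \eqref{leadi}, and the relations \eqref{b0c0} extracted from the degree-$5$, $4$, $3$ coefficients), not as identities that hold automatically. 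Since you cannot separate the $(P,Q)$-part of each coefficient equation from the remainder, the vanishing of the coefficients of $\mathcal{P}$ alone cannot deliver $P=Q=0$; extracting two independent linear conditions on $(P,Q)$ from it is precisely what your argument fails to do.

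The missing idea is to leave $\mathcal{P}$ aside and bring in \emph{independent} constraint equations not yet exploited: the Ricci/Hessian restrictions. The paper substitutes $X(\ln\lambda)$, $Y(\ln\lambda)$ from \eqref{xyofz}, together with the mixed second derivatives $X(Z(\ln \lambda))$, $Y(Z(\ln \lambda))$ from \eqref{hessxy}, into the Hessian-difference equation \eqref{hessxxyy}; clearing denominators produces a $7^{\mathrm{th}}$-degree polynomial in $Z(\ln\lambda)$ with basic coefficients whose \emph{leading} coefficient is exactly $b^6(b_2+c_1)$. Since $Z(\ln\lambda)$ is not basic, Lemma \ref{pol} forces all coefficients to vanish, and $b\neq 0$ (because $b$ is a nonzero multiple of $\Omega(X,Y)$ for $r\notin\mathcal{R}_0$ and $\Omega \neq 0$) gives $b_2+c_1=0$. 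The same substitution into \eqref{hessxy} itself, symmetrized by adding the relation with $X$ and $Y$ permuted, yields another $7^{\mathrm{th}}$-degree polynomial with leading coefficient $b^6(b_1-c_2)$, whence $b_1-c_2=0$. This is both shorter and logically clean: the two quantities to be annihilated appear as leading coefficients of polynomials obtained from fresh equations, rather than having to be disentangled from the already-exhausted coefficients of $\mathcal{P}$.
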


\begin{proof}
Substitute $X(\ln \lambda)$, $Y(\ln \lambda)$ from \eqref{xyofz} and $X(Z(\ln \lambda))$, $Y(Z(\ln \lambda))$ from \eqref{hessxy}, in Equation \eqref{hessxxyy} to obtain a $7^{th}$ degree polynomial equation with basic coefficients in $Z(\ln \lambda)$ whose leading coefficient $b^6(b_2+c_1)$. Since $Z(\ln \lambda)$ is not basic, $b_2+c_1$ must vanish according to Lemma \ref{pol}.  

Analogously, substituting $X(\ln \lambda)$, $Y(\ln \lambda)$ from \eqref{xyofz} in Equation \eqref{hessxy},
then summing with the corresponding relation with $X$ and $Y$ permuted, we obtain again a $7^{th}$ degree polynomial equation in $Z(\ln \lambda)$ with basic coefficients, the leading one being $b^6(b_1 - c_2)$.
\end{proof}

This Lemma allows us to eliminate $b_1$ and $b_2$ in the following computations. According to \eqref{xyofz} and \eqref{propaXZ}, $V (\ln \lambda)=\tfrac{\lambda^{8-2r}}{4}Q$, where $Q=(b^2 \widetilde{c}_3 Z(\ln \lambda)^3+...)/((r-3)b^2 b_3 Z(\ln \lambda)^3+...)$ is a rational function in $Z(\ln \lambda)$ with basic coefficients. Plugging this into \eqref{vhessxxzz} together with $X(\ln \lambda)$ and  $Y (\ln \lambda)$ from \eqref{xyofz} gives us: 
$$
\frac{\lambda^{8-2r}}{4}= \frac{b^6 u(r)\widetilde{c}_3 Z(\ln \lambda)^9 + \dots (\text{lower degree terms})}{b^6 v(r)\Omega(X,Y)^2 \widetilde{c}_3 Z(\ln \lambda)^7 + \dots (\text{lower degree terms})},
$$ 
where $\widetilde{c}_3 := c_{3} - \tfrac{r-4}{r-2}\beta(Y)$, $u(r)=\frac{(r-3)(r-4)(2r-5)}{2(r-2)}$ and $v(r)=\frac{3r^2 -20r + 40}{2(r-2)}$.

Suppose $r \neq 5/2$ to have $u(r)\neq0$ (notice that $v(r)\neq0$). Replacing $V(\ln \lambda)$ and $\tfrac{\lambda^{8-2r}}{4}$ with their (rational) expression in terms of $Z(\ln \lambda)$ in the identity $V\big(\tfrac{\lambda^{8-2r}}{4}\big)=(8-2r)\frac{\lambda^{8-2r}}{4} V(\ln \lambda)$ and using \eqref{ricphm2+} gives a polynomial equation with basic coefficients in $Z(\ln \lambda)$ whose leading term vanishes if and only if $b_3=0$ or $\widetilde{c}_3=0$. A similar argument, starting from \eqref{propaYZ}, imposes $c_3=0$ or $\widetilde{b}_3=0$, where $\widetilde{b}_3 := b_{3} - \tfrac{r-4}{r-2}\beta(X)$. 

Due to \eqref{leadi} we have to consider only the following two sub-cases.

\subsection{Sub-case $b_3=c_3=0$.} Returning to the polynomial equation $\mathcal{P}=0$  mentioned above, we obtain $b_0 \beta(X)+ c_0 \beta(Y)=0$ from the $5^{th}$ degree coefficient term. Inserting it in the $4^{th}$ and $3^{rd}$ degree coefficient terms we obtain respectively
\begin{equation}\label{b0c0}
\begin{split}
& 2(b_0^{2}+c_0^{2})(10-3r)\Omega(X,Y)+3c_2(b_0 \beta(Y) - c_0 \beta(X))=0, \\
& c_1 \big[7(b_0^{2}+c_0^{2})(10-3r)\Omega(X,Y)+10c_2(b_0 \beta(Y) - c_0 \beta(X))\big]=0.
\end{split}
\end{equation}

If $c_1 \neq 0$, the system \eqref{b0c0} implies $b_0=c_0=0$; therefore $X(\ln \lambda)=Y(\ln \lambda)=0$, that is we are in the known case of aligned vorticity and  acceleration where the conjecture is true \cite{coll}, so $V(\ln \lambda)=0$ that contradicts our assumption that $Z(\ln \lambda)$ is not basic.

If $c_1=0$, the coefficient of the $2^{rd}$ degree term gives us
$$
c_2^2 \left[\tfrac{15r-32}{5(r-2)}(b_0^{2}+c_0^{2})(10-3r)\Omega(X,Y)+4c_2(b_0 \beta(Y) - c_0 \beta(X))\right]=0,
$$
where we have used $\beta(Z)=-\tfrac{2}{5(r-4)}c_2$, consequence of the fact established in Lemma \ref{bc}. So either $c_2=0$ (and, from \eqref{b0c0} we obtain again $b_0=c_0=0$, so $X(\ln \lambda)=Y(\ln \lambda)=0$) or $c_2 \neq 0$ and the system formed by the above equation and the first Equation \eqref{b0c0} implies again $b_0=c_0=0$ (so $X(\ln \lambda)=Y(\ln \lambda)=0$) except for $r=\tfrac{16}{5}$.

\subsection{Sub-case $\widetilde{b}_3=\widetilde{c}_3=0$.} In this case the system formed by the cancellation of coefficients of $\mathcal{P}$ can be solved in $c_1$ and $c_2$ and, after a case by case analysis, we obtain (except for $r=\tfrac{12}{5}$)
$$X(\ln \lambda)=\tfrac{\beta(Y)}{(10-3r)\Omega(X,Y)}, \qquad 
Y(\ln \lambda)= - \tfrac{\beta(X)}{(10-3r)\Omega(X,Y)}.$$
Plugging this into \eqref{ricphm2+} we see that either $X(\ln \lambda)$, $Y(\ln \lambda)$ are both zero (the case of aligned vorticity and acceleration) or $V(\ln \lambda)=0$, which contradicts the assumption that $Z(\ln \lambda)$ is not basic.

\medskip
In conclusion, if $D \neq 0$, the conjecture is true for $r \notin \{\tfrac{12}{5}, \tfrac{5}{2}, \tfrac{16}{5}\}$.

\section{The case $D=0$} 

In this case we deal essentially with the situation when the vorticity is orthogonal to the acceleration, $Z(\ln \lambda) = 0$. Since the compatibility conditions \eqref{c1} and \eqref{c2} become trivial, the remaining one, \eqref{c3}, comes into play. By successive propagation of this condition we can eliminate all terms in $X(\ln \lambda)$ and  $Y(\ln \lambda)$ and obtain a quadratic equation in $V(V(\ln \lambda))$ and $V(\ln \lambda)^2$ with coefficients involving only $\lambda$ and basic quantities. This quadratic equation conserves its form upon iterated $V$-derivations, so the terms in $V(\ln \lambda)$ can be eliminated and finally it results an equation for which Lemma \ref{pollambda} applies to conclude that the fluid must be expansion free (once the rotation is non-zero).

Assume that $D=0$ and $r \notin \mathcal{R}_0$. According to Lemma \ref{pol}, either $b=0$ (that is $\Omega = 0$ and the proof ends) or $Z(\ln \lambda)$ is basic. If $Z(\ln \lambda)$ is basic and $Z(\ln \lambda)\neq 0$, Proposition \ref{basprop} applies and we obtain the conclusion. The remaining case is $Z(\ln \lambda) = 0$. 

For the rest of this section, let us suppose moreover that $Z(\ln \lambda) = 0$ and $\Omega(X,Y)\neq 0$. 

From \eqref{ricphm2+} we see that $\beta(Z)=0$ that is $g([X,Y], Z)=0$ (recall that  $\beta(Z)=-\lambda^2g([X,Y], Z)\Omega(X,Y)$). 
Equation \eqref{c1} reduces to 
\begin{equation}\label{c1red}
\begin{split}
\lambda^2 g(\nabla_X Y, Z)X(\ln \lambda) + \lambda^2 g([Z,Y], Y)Y(\ln \lambda) = \tfrac{b_0}{(10-3r)\Omega(X,Y)},
\end{split}
\end{equation}
while Equation \eqref{c2} reduces to 
\begin{equation}\label{c2red}
\begin{split}
\lambda^2 g([Z,X], X)X(\ln \lambda) + \lambda^2 g(\nabla_X Y, Z)Y(\ln \lambda) = -\tfrac{c_0}{(10-3r)\Omega(X,Y)}.
\end{split}
\end{equation}
Notice that the system formed by \eqref{c1red} and \eqref{c2red} is dependent ($D=0$).

Therefore two situations have to be considered: in \eqref{c1red} at least one of the (basic) coefficients of $X(\ln \lambda)$ and $Y(\ln \lambda)$ is not vanishing, or both are zero.

\subsection{Sub-case $g(\nabla_X Y, Z)^2 + g([Z,Y], Y)^2 \neq 0$} 
Suppose $g(\nabla_X Y, Z) \neq 0$ (the other choice is similar). According to \eqref{c1red}, $X(\ln \lambda)$ is a linear function of $Y(\ln \lambda)$ with basic coefficients: $X(\ln \lambda) = \mathfrak{a}Y(\ln \lambda)+ \mathfrak{b}$. Taking the 
$V$-derivative of this relation and employing \eqref{ricphm2+} we obtain: 
$$
V(\ln \lambda) = \tfrac{\lambda^{8-2r}}{4(r-3)\mathfrak{b}} \big\{\beta(\mathfrak{a}X-Y)+(10-3r)\Omega(X,Y)[(\mathfrak{a}^2+1)Y(\ln \lambda) + \mathfrak{a}\mathfrak{b}]\big\},
$$
where we supposed $\mathfrak{b}\neq 0$ (otherwise the conclusion is immediate). Plugging this into Equation \eqref{vhessxxyy} we obtain a third degree polynomial in $Y(\ln \lambda)$ with basic coefficients. According to Lemma \ref{pol}, either $Y(\ln \lambda)$ is basic or the coefficients are all vanishing. If $Y(\ln \lambda)$ is basic then $X(\ln \lambda)$ is basic and we can apply Proposition \ref{basprop} to conclude. If this is not the case, then the leading coefficient $(\mathfrak{a}^4-1)(10-3r)(3-r)(5-r-A(r))\Omega(X,Y)$ must vanish, so $r = 5/2$ (a case not covered by our proof) or $\mathfrak{a}=\pm 1$. But in the last case, the coefficient second degree  term is $\tfrac{8(r-4)}{r-2}(10-3r)(3-r)\mathfrak{b}\Omega(X,Y)$ and it cannot be zero, contradiction.

\subsection{Sub-case $g(\nabla_X Y, Z)=0$ and  $g([Z,Y], Y)=0$} Inserting this in \eqref{c2red} we obtain that either $X(\ln \lambda)$ is basic, or $g([Z,X],X)=0$. 

Firstly, let us suppose that $g([Z,X],X)\neq 0$ (so $X(\ln \lambda)$ is basic). From \eqref{ricphm2+} we have:
\begin{equation}\label{ricphm2+0}
\tfrac{\lambda^{2(4-r)}}{4} \{ \beta(X) + (10- 3r)Y(\ln \lambda)\Omega (X,Y)\} + (3-r)X(\ln \lambda)V(\ln \lambda) =0.
\end{equation}
If $X(\ln \lambda)=0$, then $Y(\ln \lambda)=-\frac{\beta(X)}{(10- 3r)\Omega (X,Y)}$ is basic, so either Proposition \ref{basprop} applies (if $Y(\ln \lambda)\neq0$), or $\gr \ln \lambda \in \VV$ (if $Y(\ln \lambda)=0$). In the latter case $U$ is hypersurface orthogonal (so irrotational, a contradiction) if $\lambda \neq$ constant, or $U(\ln \lambda)=0$ if $\lambda$ is constant.

\noindent If $X(\ln \lambda)\neq 0$, then eliminating $V(\ln \lambda)$ and $\tfrac{\lambda^{8-2r}}{4}$ from \eqref{vhessxxyy} and \eqref{ricphm2+0} we obtain a $3^{rd}$ degree polynomial equation in  $Y(\ln \lambda)$ with basic coefficients whose leading coefficient is
$(10-3r)(3-r)(5-r-A(r))\Omega(X,Y)$. Again Lemma \ref{pol} imposes that either $Y(\ln \lambda)$ is basic (so Proposition \ref{basprop} applies) or the leading coefficient is zero, i.e. $r=5/2$ (excepted case).

\medskip
Secondly, let us suppose that $g([Z,X],X) = 0$. This remaining sub-case can be resumed by the following assumptions:

\medskip
$\bullet$ \ $Z(\ln \lambda)=0$; 

\medskip
$\bullet$ \ $g(\nabla_X Y, Z) = g(\nabla_Y X, Z)=0$ ;

\medskip
$\bullet$ \ $g([Z,X],X)=g([Z,Y],Y)=0$ ($\Rightarrow \ Z(\Omega(X,Y))=0, \ \di Z=0$).

\medskip
In this last sub-case $Z$ itself is shear-free and divergence-free \footnote{We are close to the situation in which there exists a Killing vector parallel to the vorticity, a notorious case for difficulty despite the drastic simplifications it brings about, see \cite{col}.}. Moreover its 3-dimensional orthogonal distribution is integrable (so $Z$ is aligned with a gradient vector) and minimal. It also follows from the above conditions that $\Hess_{\ln \lambda}(Z,X) = \Hess_{\ln \lambda}(Z,Y)=0$, $\varphi^*\Ric^N (Z,X) = \varphi^*\Ric^N (Z,Y)=0$ and $\dif \beta(Z,X) = \dif \beta(Z,Y)=0$. 

Since, in this case, $\Hess_{\ln \lambda}(Z, Z) = - X(\ln \lambda)^2 - Y(\ln \lambda)^2 + \delta_1 X(\ln \lambda) + \delta_2 Y(\ln \lambda) + \lambda^{-2}U(\ln \lambda)^2$, Equation \eqref{hesszz} becomes:
\begin{equation}\label{hesszzsp}
\begin{split}
&(1- B(r))(X(\ln \lambda)^2 + Y(\ln \lambda)^2) - \delta_1 X(\ln \lambda) - \delta_2 Y(\ln \lambda)\\
&- \tfrac{1}{2}\lambda^{-2}U(\ln \lambda)^2 +  \tfrac{1}{6}\lambda^{r-2}
-  \tfrac{3r-14}{24(r-2)}\lambda^{8-2r}\Omega(X, Y)^2 \\
&-  \tfrac{1}{r-2} \left(\varphi^*\Ric^N (Z,Z)+ \tfrac{r-6}{12}\Scal^N\right)=0,
\end{split}
\end{equation}
where we introduced the notations $\delta_1 = \lambda^{2}g([Z,X],Z)$ and $\delta_2 = \lambda^{2}g([Z,Y],Z)$.

Recall that $\beta(X)=Y(\Omega(X,Y))+ \delta_2 \Omega(X,Y)$ and $\beta(Y)=-X(\Omega(X,Y))- \delta_1 \Omega(X,Y)$ (see the proof of Lemma \ref{basic+}). Divide \eqref{c3} by $\Omega(X,Y)$, multiply \eqref{hesszzsp} with $(10-3r)$ and take their sum to obtain:
\begin{equation}\label{xyuzz}
\begin{split}
& \big(C(r)+(10-3r)(1-B(r)\big)\left[X(\ln \lambda)^2+Y(\ln \lambda)^2\right] \\
&+\tfrac{15-4r}{\Omega(X,Y)}\left[\beta(Y)X(\ln \lambda)-\beta(X)Y(\ln \lambda)\right]\\
&+\tfrac{30-7r}{2}\lambda^{-2}U(\ln \lambda)^2 + \tfrac{11r-30}{6}\lambda^{r-2}
+\tfrac{60 + 44 r - 21 r^2}{24(r-2)}\lambda^{8-2r}\Omega(X,Y)^2\\
&+\tfrac{36 + 12 r - 7 r^2}{12(r-2)}\Scal^N - \tfrac{2(10-3r)}{r-2}\varphi^*\Ric^N (Z,Z) +\tfrac{\dif \beta^\HH (X,Y)}{\Omega(X,Y)} =0.
\end{split}
\end{equation}

From \eqref{ricphm2+} we deduce the following propagation equations:
\begin{equation*}
\begin{split}
&V\left(X(\ln \lambda)^2 + Y(\ln \lambda)^2 \right) = 2 \big\{ \tfrac{\lambda^{8-2r}}{4}\left[\beta(X)X(\ln \lambda)+\beta(Y)Y(\ln \lambda)\right]\\
&+(3-r)V(\ln \lambda)\left(X(\ln \lambda)^2 + Y(\ln \lambda)^2\right)\big\},
\end{split}
\end{equation*}
\begin{equation*}
\begin{split}
& V\left(\beta(Y)X(\ln \lambda)-\beta(X)Y(\ln \lambda)\right)\\
&= (10-3r)\Omega(X,Y)\tfrac{\lambda^{8-2r}}{4}\left[\beta(X)X(\ln \lambda)+\beta(Y)Y(\ln \lambda)\right]\\
&+(3-r)V(\ln \lambda)\left(\beta(Y)X(\ln \lambda)-\beta(X)Y(\ln \lambda)\right),
\end{split}
\end{equation*}
\begin{equation*}
\begin{split}
& V\left(\beta(X)X(\ln \lambda)+\beta(Y)Y(\ln \lambda)\right)\\
&=\tfrac{\lambda^{8-2r}}{4}\big\{\beta(X)^2+\beta(Y)^2-(10-3r)\Omega(X,Y)\left[\beta(Y)X(\ln \lambda)-\beta(X)Y(\ln \lambda)\right]\big\}\\
&+(3-r)V(\ln \lambda)\left(\beta(X)X(\ln \lambda)+\beta(Y)Y(\ln \lambda)\right).
\end{split}
\end{equation*}
The "cyclic" property of the $V$-derivatives of the linear term $\beta(Y)X(\ln \lambda)-\beta(X)Y(\ln \lambda)$ allows us to eliminate it by taking the derivative of Equation \eqref{xyuzz} twice along $V$. 

The first $V$-derivative of Equation \eqref{xyuzz} along $V$ (simplified by reinserting $\beta(Y)X(\ln \lambda)-\beta(X)Y(\ln \lambda)$ from \eqref{xyuzz} in the result) reads:
\begin{equation}\label{xyuzzv}
\begin{split}
&\tfrac{2(r-3)(r-4)(11r-30)}{3(r-2)}\tfrac{\lambda^{8-2r}}{4}\left[\beta(X)X(\ln \lambda)+ \beta(Y)Y(\ln \lambda)\right] \\
&+V(\ln \lambda)\big\{\tfrac{2(r-3)^2(r-4)(7r-10)}{3(r-2)}\left[X(\ln \lambda)^2+Y(\ln \lambda)^2\right]\\
&+ \tfrac{2r(2r-5)}{3}\lambda^{r-2} + \tfrac{r(3r-14)(7r-20)}{12(r-2)} \lambda^{8-2r}\Omega(X,Y)^2\\
&-(r-3)\left(\tfrac{8(r-3)}{3(r-2)}\Scal^N + \tfrac{2(10-3r)}{r-2}\varphi^*\Ric^N (Z,Z) - \tfrac{\dif \beta^\HH (X,Y)}{\Omega(X,Y)}\right)\big\} =0.
\end{split}
\end{equation}
Now, taking the derivative along $V$ of Equation \eqref{xyuzzv}, then replacing $\beta(X)X(\ln \lambda)+\beta(Y)Y(\ln \lambda)$ from \eqref{xyuzzv}, $\beta(Y)X(\ln \lambda)-\beta(X)Y(\ln \lambda)$ from \eqref{xyuzz} and also $X(\ln \lambda)^2 + Y(\ln \lambda)^2$ from the trace constraint \eqref{tr2}, gives us an equation of the following form:
\begin{equation}\label{VV}
\begin{split}
&\alpha(r)V(V(\ln \lambda))^2 + \beta(r)V(V(\ln \lambda)) V(\ln \lambda)^2 + \gamma(r)V(\ln \lambda)^4\\
&+V(\ln \lambda)^2 P_1(\lambda^{8-2r}, \lambda^{r-2})
+V(V(\ln \lambda))P_2(\lambda^{8-2r}, \lambda^{r-2})+ P_3(\lambda^{8-2r}, \lambda^{r-2})=0,
\end{split}
\end{equation}
where $P_i$ are polynomial expressions with basic coefficients (see Appendix for the explicit form) and we supposed $r\notin \{\tfrac{5}{2}, \tfrac{30}{11}, \tfrac{15}{4}\}$ to have well defined expressions in $r$.

From \eqref{tr2} we obtain the $3^{rd}$ order $V$-derivative of $\ln \lambda$ in the form:
\begin{equation}\label{VVV}
V(V(V(\ln \lambda))) = V(\ln \lambda) \left(q_1(r) V(V(\ln \lambda)) 
+ q_2(r) V(\ln \lambda)^2 + P_0(\lambda^{8-2r}, \lambda^{r-2}) \right),
\end{equation}
where we have used Equation \eqref{xyuzzv} to eliminate $\beta(X)X(\ln \lambda)+ \beta(Y)Y(\ln \lambda)$ and again \eqref{tr2} to eliminate $X(\ln \lambda)^2 + Y(\ln \lambda)^2$ (see Appendix for the explicit form).
This constraint on $V(V(V(\ln \lambda)))$ implies that Equation \eqref{VV} conserves its general form when it is derived along $V$ (after simplification of $V(\ln \lambda)$, if not zero). More precisely, the propagation of \eqref{VV}, after inserting \eqref{VVV} and dividing by $V(\ln \lambda)$, is
\begin{equation}\label{VV1}
\begin{split}
&\alpha^{(1)}(r)V(V(\ln \lambda))^2 + \beta^{(1)}(r)V(V(\ln \lambda)) V(\ln \lambda)^2 + \gamma^{(1)}(r)V(\ln \lambda)^4\\
&+V(\ln \lambda)^2 P_1^{(1)}(\lambda^{8-2r}, \lambda^{r-2})
+V(V(\ln \lambda))P_2^{(1)}(\lambda^{8-2r}, \lambda^{r-2})
+ P_3^{(1)}(\lambda^{8-2r}, \lambda^{r-2})=0,
\end{split}
\end{equation}
with 
\begin{equation*}
\begin{split}
\alpha^{(1)}(r)=\alpha(r)q_1(r)+2\beta(r), & \quad  P_1^{(1)}=P_1^{\prime}+q_2(r)P_2 - q_1(r)P_1+\beta(r)P_0 , \\
\beta^{(1)}(r)=2\alpha(r)q_2(r)+4\gamma(r), & \quad P_2^{(1)}=P_2^{\prime} + 2P_1+2\alpha(r)P_0 , \\
\gamma^{(1)}(r)=\beta(r)q_2(r)-\gamma(r)q_1(r), & \quad P_3^{(1)}=P_3^{\prime}-q_1(r)P_3 + P_0P_2,
\end{split}
\end{equation*} 
where $P_{i}^{\prime}=\lambda \tfrac{\dif P_{i}}{\dif \lambda}$.

Iterating the $V$-derivative of \eqref{VV} shows that $\alpha^{(k)}$, $\beta^{(k)}$, $\gamma^{(k)}$ form geometric progressions with the same common ratio (or $(\alpha^{(1)}, \beta^{(1)},\gamma^{(1)})$ is an eigenvector of the recurrence matrix). Therefore, from the first four $V$-derivatives of \eqref{VV} we obtain an overdetermined \textit{linear} system in $V(\ln \lambda)^2$ and $V(V(\ln \lambda))$. Its compatibility condition (zero determinant) leads us to a $4^{th}$ degree polynomial equation $\mathcal{P}(\lambda^{8-2r}, \lambda^{r-2})=0$ having basic coefficients:
$$
\eta_1(r)\Omega(X,Y)^2\lambda^{8-2r+3(r-2)} + \eta_2(r)\Omega(X,Y)^8\lambda^{4(8-2r)} 
+ \dots =0, 
$$ 
where $\eta_1$ and $\eta_2$ are rational functions with common zeros $r=0$, $r=\tfrac{10}{3}$ and $r=\tfrac{20}{7}$ (these are common zeros for all $4^{th}$ degree coefficient terms).

Recall that $r\neq 0$ by the conjecture hypothesis and that $r\neq \tfrac{10}{3}$ as we have already assumed. Suppose moreover  $r \neq \tfrac{20}{7}$. If $\eta_1(r)=0$, we employ Lemma \ref{pollambda} with reference coefficient $\eta_2$. Since the corresponding exponent of $\lambda$ is different from the others exponents in $\mathcal{P}$, we conclude that $V(\ln \lambda)=0$. If $\eta_1(r)\neq 0$ we apply Lemma \ref{pollambda} with reference coefficient $\eta_1$ to deduce that either $V(\ln \lambda)=0$ or  $r \in \{ \tfrac{14}{5}, \tfrac{22}{7}, 6\}$. But in the latter case we check that Lemma \ref{pollambda} with reference coefficient $\eta_2$ leads us to $V(\ln \lambda)=0$.

In conclusion, if $D = 0$, the conjecture is true for $r \notin \{\tfrac{5}{2}, \tfrac{30}{11}, \tfrac{20}{7}, \tfrac{15}{4} \}$.

\section{Conclusion}

We have proved the following
\begin{te}
Consider a shear-free perfect fluid solution of Einstein's field equations
where the fluid pressure satisfies a barotropic equation of state of the form $p=w \rho$ with 
$$w \in \RR \setminus \left\{-\tfrac{1}{5}, -\tfrac{1}{6}, -\tfrac{1}{11}, -\tfrac{1}{21}, \tfrac{1}{15}, \tfrac{1}{4}\right\}.$$
Then the fluid is either non-rotating or non-expanding.
\end{te}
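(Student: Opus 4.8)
The plan is to recast everything through the parameter $r$ defined by $p = \tfrac{r-3}{3}\rho$, so that $w = \tfrac{r-3}{3}$ and the six excluded values of $w$ correspond exactly to $r \in \{\tfrac{12}{5}, \tfrac{5}{2}, \tfrac{30}{11}, \tfrac{20}{7}, \tfrac{16}{5}, \tfrac{15}{4}\}$, while the four classically settled cases $w \in \{0, \pm\tfrac{1}{3}, \tfrac{1}{9}\}$ become $r \in \mathcal{R}_0$ and may be set aside at once. By Remark \ref{dualit} the fluid is a local $r$-harmonic morphism of dilation $\lambda$, and the gravity coupling \eqref{couple+} fixes the left-hand sides of the curvature identities \eqref{ricphm1}--\eqref{ricphm3}. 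Working in a preferred orthonormal frame $\{U, \lambda X, \lambda Y, \lambda Z\}$ with $\imath_Z\Omega = 0$, I would reduce the goal to a basicity statement via Proposition \ref{basprop}: assuming the fluid both rotates ($\Omega(X,Y)\neq 0$) and expands ($U(\ln\lambda)\neq 0$), it suffices to force either $Z(\ln\lambda)$ or the pair $X(\ln\lambda), Y(\ln\lambda)$ to be basic and not all zero, since that proposition then yields a contradiction.

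The argument next dichotomizes on the discriminant $D = b^2 Z(\ln\lambda)^2 - b(c_1-b_2)Z(\ln\lambda) + (b_1 c_2 - b_2 c_1)$ of the linear system \eqref{c1}--\eqref{c2}. When $D \neq 0$ I would solve that system for $X(\ln\lambda)$ and $Y(\ln\lambda)$ as rational functions of $Z(\ln\lambda)$ with basic coefficients, as in \eqref{xyofz}, then propagate \eqref{c1} and \eqref{c2} along $V$ using $[V,X]=[V,Y]=[V,Z]^\HH=0$ and the second-order formulas \eqref{ricphm2+}, eliminating $V(\ln\lambda)$ and $\lambda^{8-2r}$ to obtain a single polynomial $\mathcal{P}(Z(\ln\lambda))=0$ with basic coefficients. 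By Lemma \ref{pol} either $Z(\ln\lambda)$ is basic, whence Proposition \ref{basprop} closes the case, or all coefficients of $\mathcal{P}$ vanish. The leading coefficient \eqref{leadi} together with Lemma \ref{bc} (forcing $b_2+c_1 = 0$ and $b_1 - c_2 = 0$) splits this into the sub-cases $b_3 = c_3 = 0$ and $\widetilde b_3 = \widetilde c_3 = 0$, each of which I would run down to $b_0 = c_0 = 0$, i.e. $X(\ln\lambda)=Y(\ln\lambda)=0$ --- the aligned vorticity--acceleration case already settled in \cite{coll}. This succeeds for all $r$ outside $\{\tfrac{12}{5}, \tfrac{5}{2}, \tfrac{16}{5}\}$.

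When $D = 0$, Lemma \ref{pol} already forces $Z(\ln\lambda)$ basic, so Proposition \ref{basprop} handles everything except the configuration $Z(\ln\lambda)=0$, in which vorticity is orthogonal to acceleration and \eqref{c1}, \eqref{c2} degenerate. I expect \emph{this} to be the main obstacle: after disposing of the sub-case where some frame coefficient is nonzero (which, via \eqref{vhessxxyy} and Lemma \ref{pol}, forces basicity except at $r=\tfrac{5}{2}$), one is left with $Z$ itself shear-free, divergence-free and gradient-aligned, so that only the single condition \eqref{c3} remains. The delicate step is to combine \eqref{c3} with the specialized Hessian identity \eqref{hesszzsp} into \eqref{xyuzz}, then use the cyclic behaviour of the $V$-derivatives of $\beta(Y)X(\ln\lambda) - \beta(X)Y(\ln\lambda)$ to annihilate all horizontal first derivatives after two $V$-differentiations, arriving at the quadratic relation \eqref{VV} in $V(V(\ln\lambda))$ and $V(\ln\lambda)^2$. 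Because the third $V$-derivative is constrained by \eqref{VVV}, equation \eqref{VV} reproduces its own shape under $V$, the coefficient triples $(\alpha^{(k)}, \beta^{(k)}, \gamma^{(k)})$ become geometric progressions with a common ratio, and four successive $V$-derivatives yield an overdetermined linear system in $V(\ln\lambda)^2$ and $V(V(\ln\lambda))$.

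Its compatibility condition (vanishing determinant) is a polynomial $\mathcal{P}(\lambda^{8-2r}, \lambda^{r-2})=0$ with basic coefficients, whereupon Lemma \ref{pollambda} applies: comparing the distinguished exponent $8-2r+3(r-2)$ (coefficient $\eta_1$) or $4(8-2r)$ (coefficient $\eta_2$) against the remaining ones forces $V(\ln\lambda)=0$, i.e. no expansion, unless $r$ is a common zero of the rational functions $\eta_1, \eta_2$. Tracking these zeros --- ruling out $r = \tfrac{20}{7}$ and the values $r \in \{\tfrac{5}{2}, \tfrac{30}{11}, \tfrac{15}{4}\}$ where the coefficients of \eqref{VV} degenerate --- closes the $D=0$ case for all other $r$. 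Combining the two cases, the conjecture holds for every $r \notin \{\tfrac{12}{5}, \tfrac{5}{2}, \tfrac{30}{11}, \tfrac{20}{7}, \tfrac{16}{5}, \tfrac{15}{4}\}$; translating back through $w = \tfrac{r-3}{3}$, this is precisely the stated exclusion of $w \in \{-\tfrac{1}{5}, -\tfrac{1}{6}, -\tfrac{1}{11}, -\tfrac{1}{21}, \tfrac{1}{15}, \tfrac{1}{4}\}$, and in all remaining cases the fluid is non-rotating or non-expanding.
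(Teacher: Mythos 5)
Your proposal is correct and takes essentially the same approach as the paper: the same translation $w=\tfrac{r-3}{3}$ and reduction to basicity via Proposition \ref{basprop}, the same dichotomy on the discriminant $D$ handled with Lemmas \ref{pol}, \ref{pollambda} and \ref{bc}, the same sub-case structure ($b_3=c_3=0$ versus $\widetilde b_3=\widetilde c_3=0$ when $D\neq 0$, and the quadratic propagation system \eqref{VV}--\eqref{VVV} when $D=0$), and the same exceptional values in each branch. There is nothing substantive to distinguish it from the paper's own argument.
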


The conclusion of the theorem is in fact sharper. On one hand, in the expanding and non-rotating case, the acceleration vanishes (and $\gr^\HH \ln \lambda =0$) and the spacetime must be in this case a warped product of an interval with a constant curvature 3-manifold, that is a FRW model. This was proved in \cite{col83} by analysing the complete list of spacetimes that admit a shear-free non-rotating fluid, classified in terms of the Weyl curvature symmetries. Another, direct proof can be done by starting with compatibility equations as \eqref{vhessxxyy}, \eqref{vhessxxzz} and the propagation of \eqref{hessxy}, and showing that $\gr^\HH \ln \lambda =0$ (this requires a separate analysis for $r \in \{3,4,\tfrac{5}{2} \}$).

On the other hand, in the non-expanding and rotating case, the spacetime must be stationary since the fundamental vector $V$ is Killing.

Thus, adopting the dual perspective of $r$-harmonic morphisms on spacetimes satisfying \eqref{couple+}, the above result states that they are either of warped product type or of Killing type (with six exceptions for $r$).

Concerning the exceptional values of $w$, we notice that the two  positive ones are in the physical regime defined by a speed of sound between 0 and 1, and that all excepted fluids are satisfying the strong energy condition $\rho + 3p >0$. We expect that the proof can be adapted for the remaining values of $w$ as for the already known cases $w \in \{0, \pm \tfrac{1}{3}, \tfrac{1}{9} \}$ that also appeared to be exceptional in our proof. To what extent it can also be adapted for non-linear equations of state is the next step to consider. We postpone these tasks for a future work.


\section{Appendix}
\subsection{Notations and conventions}  Throughout the paper $(M, g)$ denotes a connected time-oriented Lorentz 4-dimensional manifold with metric signature $(-,+,+,+)$. We denote by $\mathcal{L}$ the Lie derivative, by $\nabla$ the Levi-Civita connection of $(M,g)$, and we use the following sign conventions for the curvature tensor field $R(X,Y)Z=\nabla_X\nabla_Y Z-\nabla_Y \nabla_X Z-\nabla_{[X,Y]}Z$,  and $\Delta f = \di(\gr f)$ for the Laplacian on functions.  $\Hess_f=\nabla \dif f$ is the Hessian, while $\Ric$ and $\Scal$ denote the Ricci and scalar curvature of $(M, g)$, respectively.

Given a 2-form $\Omega$ on $(M, g)$, the interior product with a vector $X$ is $\imath_X \Omega=\Omega(X, \cdot)$ and its co-differential is defined in analogy with Riemannian case by $\delta \Omega(X) =-\sum_{i=1}^{4}\varepsilon_i(\nabla_{e_i}\Omega)(e_i, X)$, where $\{e_i\}_{i=\ov{1,4}}$ is an orthonormal frame with $g(e_i, e_i)=\varepsilon_i =\pm 1$. The metric $g$ on $M$ induces a (pointwise) metric on the bundle of $p$-covariant tensors (or $p$-forms) on $M$, defined by:
$\langle \mathcal{A}, \mathcal{B}\rangle=\tfrac{1}{p !}\sum_{i_1, ..., i_p=1}^{4} \varepsilon_{i_1} \dots \varepsilon_{i_p}\mathcal{A}(e_{i_1}, \dots, e_{i_p})\mathcal{B}(e_{i_1},\dots, e_{i_p})$, that provides us with the norm $\abs{\mathcal{A}}$ of such object.  The notation $\varphi^* \mathcal{A}$ refers to the usual \textit{pullback} of $\mathcal{A}$ by the mapping $\varphi$.

\medskip
Along the paper we extensively used \texttt{Mathematica} (\cite{math}) to simplify the coefficients and to solve polynomial equations. Many of the formulae involved in the final argument in both principal cases are very long and were not included in the main text. We reproduce here some of them. Full version is available in separate files.

\subsection{Details for the case $D \neq 0$} When $b_1=c_2$ and $b_2=-c_1$ the polynomial we employed is $\mathcal{P}(Z(\ln \lambda))=\sum_{i=0}^{6} \textbf{Coeff}_i \, Z(\ln \lambda)^i$, where 

\medskip
\noindent\(\textbf{Coeff}_6 = -b^4 \big[b (b_3 \beta(X)+c_3 \beta(Y)) - \big(b_3^2+c_3^2\big) \Omega(X,Y) (10-3 r)\)\big],

\medskip
\noindent\(\textbf{Coeff}_5 =-b^3 \big[2 b^2 (b_0 \beta(X)+c_0 \beta(Y))
- 2b ((2 b_3 c_1- c_2 c_3) \beta(X) + (b_3 c_2 + 2 c_1 c_3) \beta(Y) ) + 3\left(c_1 \big(b_3^2+c_3^2\big)  - b (b_0 b_3+c_0 c_3)\right)(10-3r) \Omega(X,Y) \)\big],

\medskip
\noindent\(\textbf{Coeff}_4 = -b^2\big[-(10-3r)\big(3c_1^2 + c_2^2 \big) \big(b_3^2+c_3^2\big) \Omega(X,Y) + b^2 \big(b_3\beta(Z) c_0-b_0 (9 \beta(X) c_1+3 \beta(Y) c_2+\beta(Z) c_3) - 
2b_0^2 \Omega(X,Y) (10-3r)+c_0(-9 \beta(Y) c_1 + 3 \beta(X) c_2 -2(10-3r)c_0 \Omega(X,Y))\big) - b \big(b_3^2 \beta(Z) c_2-2 b_3 c_1 (3 \beta(X) c_1+3 \beta(Y) c_2+5 b_0 \Omega(X,Y)(10-3r))+c_3 \big(-6 \beta(Y) c_1^2+6 \beta(X) c_1 c_2+\beta(Z) c_2 c_3 - 10(10-3r) c_0 c_1 \Omega(X,Y) \big)\big)\)\big],

\medskip
\noindent\(\textbf{Coeff}_3 =b\big[-c_1 \big(c_1^2+c_2^2\big) \big(b_3^2+c_3^2\big) \Omega(X,Y) (10-3r) - 2 b \big(b_3^2 \beta(Z) c_1 c_2-b_3 \big(2 \beta(X) c_1^3+\big(3 c_1^2+c_2^2\big) (\beta(Y) c_2+2 b_0 \Omega(X,Y) (10-3 r))\big)+c_3 \big(-2 \beta(Y) c_1^3 + \beta(X) \big(3 c_1^2 c_2 + c_2^3\big) + \beta(Z) c_1 c_2 c_3-60 c_0 c_1^2 \Omega(X,Y) - 20 c_0 c_2^2 \Omega(X,Y) + 6 c_0 \big(3 c_1^2+c_2^2\big) \Omega(X,Y) r\big)\big)+b^2 \big(2 b_3 \beta(Z) (2 c_0 c_1+b_0 c_2)-2 b_0 \big(8 \beta(X) c_1^2+5 \beta(Y) c_1 c_2+\beta(X) c_2^2+2 \beta(Z) c_1 c_3\big) - 7 b_0^2 c_1 \Omega(X,Y) (10-3r) + c_0 \big(10 \beta(X)
c_1 c_2-2 \beta(Y) \big(8 c_1^2+c_2^2\big)+2 \beta(Z) c_2 c_3 - 7 c_0 c_1 \Omega(X,Y) (10-3r)\big)\big)\)\big],

\medskip
\noindent\(\textbf{Coeff}_2 = b\big[ b^2 \beta(Z) \big(b_0^2+c_0^2\big) c_2+ b \big(-2 b_3 \beta(Z) \big(2 b_0 c_1 c_2+c_0 \big(3 c_1^2+c_2^2\big)\big)+2 b_0
\big(\beta(X) \big(7 c_1^3+3 c_1 c_2^2\big)+\big(3 c_1^2+c_2^2\big) (2 \beta(Y) c_2+\beta(Z) c_3)\big) + 3 b_0^2 \big(3 c_1^2+c_2^2\big) \Omega(X,Y) (10-3r) + c_0 \big(2 \beta(Y) \big(7 c_1^3+3 c_1 c_2^2\big)-4 c_2 \big(\beta(X) \big(3 c_1^2+c_2^2\big)+\beta(Z) c_1 c_3\big) + 3 c_0 \big(3 c_1^2+c_2^2\big)(10-3r) \Omega(X,Y) \big)\big)+\big(c_1^2+c_2^2\big) \big(b_3^2  c_2 \beta(Z) + b_3 \big(\beta(X) \big(-c_1^2+c_2^2\big)- 2 c_1 (\beta(Y) c_2+3 b_0 \Omega(X,Y) (10-3 r))\big)+ c_3 \big(2 \beta(X) c_1 c_2+\beta(Y)\big( -c_1^2+c_2^2 \big)+ \beta(Z)c_2 c_3 - 6 c_0 c_1 \Omega(X,Y) (10-3r)\big)\big)\)\big],

\medskip
\noindent\(\textbf{Coeff}_1=-2 b^2 \big(b_0^2+c_0^2\big) c_1 c_2 \beta(Z) + \big(c_1^2+c_2^2\big)^2(b_0 b_3+c_0 c_3) \Omega(X,Y) (10-3 r)+b \big(c_1^2+c_2^2\big) \big(2 b_3 \beta(Z) (2 c_0 c_1+b_0 c_2)-2 b_0 c_1 (3 \beta(X)c_1+3 \beta(Y) c_2+2 \beta(Z) c_3) - 5 b_0^2 c_1 \Omega(X,Y) (10-3r) + c_0 \big(-6c_1^2 \beta(Y) + 6c_1 c_2
\beta(X)  + 2 c_2 c_3\beta(Z) - 5 c_0 c_1 \Omega(X,Y) (10-3 r)\big)\big)\),

\medskip
\noindent\(\textbf{Coeff}_0=(c_1^2+c_2^2)\big[-b_3 c_0 \big(c_1^2+c_2^2\big)\beta(Z) + b_0 \big(c_1^2+c_2^2\big)(\beta(X) c_1+\beta(Y) c_2+\beta(Z) c_3)+c_0 \big(b \beta(Z) c_0 c_2+\big(c_1^2+c_2^2\big)(\beta(Y) c_1-\beta(X) c_2+c_0 \Omega(X,Y) (10-3 r))\big)+b_0^2 \big(b \beta(Z) c_2+\big(c_1^2+c_2^2\big)\Omega(X,Y) (10-3r)\big)\)\big].

\subsection{Details for the case $D=0$} In Equation \eqref{VV},
\begin{equation*}
\begin{split}
&\alpha(r)=4\tfrac{(3-r)(r-4)(7r-10)}{(r-2)(2r-5)}, \quad 
\beta(r)=\alpha(r)\tfrac{27r^2-257r+510}{2(11r-30)}, \\
&\gamma(r)=\alpha(r)\tfrac{(11-3r)(3r^2+23r-90)}{2(11r-30)},
\end{split}
\end{equation*}
and
\begin{equation*}
\begin{split}
P_1(\lambda^{8-2r}, \lambda^{r-2})=&k_1\lambda^{8-2r} + k_2 \lambda^{6-r} + k_3 \lambda^{16-4r},\\
P_2(\lambda^{8-2r}, \lambda^{r-2})=&\ell_1\lambda^{8-2r} + \ell_2 \lambda^{6-r} + \ell_3 \lambda^{16-4r},\\
P_3(\lambda^{8-2r}, \lambda^{r-2})=&\tfrac{(r-3)(r-4)(11r-30)}{24(r-2)}\Omega(X,Y)^2 \lambda^{16-4r}\left(m_1\lambda^{8-2r} + m_2 \lambda^{6-r} + m_3 \lambda^{16-4r}\right),
\end{split}
\end{equation*}
where

\noindent\(k_1=\frac{(3-r)^2 (7200-5940 r+1760 r^2-257 r^3+21 r^4)}{3(r-2)(2r-5) (11r-30)} \Scal^N - \frac{(r-3)(r-5)(19r-54)}{11r-30}\Lambda_Z ,\)

\noindent\(k_2=\frac{-21600+39840 r-26944 r^2+8588 r^3-1312 r^4+78 r^5}{3(r-2)(11r -30)} ,\)

\noindent\(k_3= \frac{12960000-49248000 r+70419600 r^2-53293200 r^3+24116984 r^4-6806516 r^5+1184369 r^6-117200 r^7+5082 r^8}{24 (r-2)^2 (2 r-5)(4r-15) (11r-30)}\Omega(X,Y)^2 ,\)

\noindent\(\ell_1=-\frac{(3-r)^2 r (7r-22)}{3(2r-5)(r-2)}\Scal^N +(3-r)\Lambda_Z ,\)

\noindent\(\ell_2=\frac{2 (-120+164r -68r^2+9 r^3)}{3 (r-2)} ,\)

\noindent\(\ell_3=\frac{144000 - 422400r + 439960 r^2 - 224068 r^3 + 60574 r^4 - 8353 r^5 + 462 r^6}{24 (r-2)^2 (2r-5)(4r-15)}\Omega(X,Y)^2 ,\)

\noindent\(m_1=-\frac{3 (10-3 r)^2}{2 (2r-5)(4r-15)}\Scal^N + \frac{10-3r}{4r-15}\Lambda_Z + \frac{\beta(X)^2+\beta(Y)^2}{\Omega(X,Y)^2}  , \) 

\noindent\(m_2=-\frac{r(r-6) (10-3 r)^2}{6 (r-2)(r-3)(4r-15)} , \)  

\noindent\(m_3=-\frac{(10-3 r)^2 (30 -23r + 5 r^2)}{4 (r-3) (r-2) (2r -5) (4r-15)} \Omega(X,Y)^2 ,\)

\noindent where $\Lambda_Z:= 2\frac{10-3r}{r-2}\varphi^*\Ric^N (Z,Z) - \frac{\dif \beta^\HH (X,Y)}{\Omega(X,Y)}$.

\medskip
In Equation \eqref{VVV},
$$
q_1(r)=-\tfrac{91r^2-547r+810}{11r-30}, \quad q_2(r)=-\tfrac{(3r-11)(29r^2-168r+240)}{11r-30}
$$
and
\begin{equation*}
\begin{split}
P_0(\lambda^{8-2r}, \lambda^{r-2})=& \left(\tfrac{(3-r)^2 (r-2)(10-3r)}{(r-4)(11r-30)}\Scal^N+\tfrac{(3-r)(r-2)(2r-5)}{2(r-4)(11r-30)}\Lambda_Z \right)\lambda^{8-2r}\\
&-\tfrac{(2r-5)(10-3r)(17r^2-92r+120)}{6(r-4)(11r-30)}\lambda^{6-r}\\
&+\tfrac{r(10-3r)(7r-20)}{8(r-4)(11r-30)}\lambda^{16-4r}\Omega(X,Y)^2\\
\end{split}
\end{equation*}
or, in short notation, $P_0(\lambda^{8-2r}, \lambda^{r-2})=n_1\lambda^{8-2r} + n_2 \lambda^{6-r} + n_3 \lambda^{16-4r}$.

At the level of coefficients, the recurrence relations involved in the iterated $V$-derivatives of Equation \eqref{VV1} read:

\noindent \(\alpha^{(j)}=-K(r)\alpha^{(j-1)} , \beta^{(j)}=-K(r)\beta^{(j-1)}, \gamma^{(j)}=-K(r)\gamma^{(j-1)}, \ j\geq 2\)
where $K(r):=\tfrac{25(r-2)(r-3)}{11r-30}$ and, for $j\geq 1$,

\noindent \(k_{2}^{(j)}=(6-r-q_1(r))k_{2}^{(j-1)}+q_2(r)\ell_{2}^{(j-1)} + \beta^{(j-1)}(r) n_2 , \) 

\noindent \(k_{3}^{(j)}=(16-4r-q_1(r))k_{3}^{(j-1)}+q_2(r)\ell_{3}^{(j-1)} + \beta^{(j-1)}(r) n_3 , \)  

\noindent \(\ell_{2}^{(j)}=(6-r)\ell_{2}^{(j-1)}+2k_{2}^{(j-1)} + 2\alpha^{(j-1)}(r) n_2 , \) 

\noindent \(\ell_{3}^{(j)}=(16-4r)\ell_{3}^{(j-1)}+2k_{3}^{(j-1)} + 2\alpha^{(j-1)}(r) n_3 , \)  

\noindent where $k_{i}^{(0)}=k_i$, $\ell_{i}^{(0)}=\ell_i$, for all $i$, $\alpha^{(0)}=\alpha$ and $\beta^{(0)}=\beta$ (the analogous recurrence relations for $k_{1}^{(j)}$ and $\ell_{1}^{(j)}$ have not been used).

For $P_{3}^{(j)}$, the recurrence relation gets slightly more complicated since 
\begin{equation*}
\begin{split}
P_{3}^{(1)}(\lambda^{8-2r}, \lambda^{r-2})=\lambda^{16-4r}\big[ & m_{1}^{(1)}\lambda^{8-2r} + m_{2}^{(1)} \lambda^{6-r} + m_{3}^{(1)} \lambda^{16-4r}\\
&+ m_{4}^{(1)}\lambda^{2r-4}+ m_{5}^{(1)}\lambda^{r-2}+ m_{6}^{(1)}\big],
\end{split}
\end{equation*}
where we focus on the following coefficients:

\noindent \(m_{2}^{(1)}=(22-5r-q_1(r))\ov{m}_{2}+\ell_{2}n_3+\ell_{3}n_2 , \) 

\noindent \(m_{3}^{(1)}=(32-8r-q_1(r))\ov{m}_{3}+\ell_{3}n_3 , \) 

\noindent \(m_{4}^{(1)}=\ell_{2}n_2 , \) 

\noindent with $\ov{m}_i:=\tfrac{(r-3)(r-4)(11r-30)}{24(r-2)}\Omega(X,Y)^2 m_i$.

Then, for $j\geq 2$ , $P_{3}^{(j)}$ will have the same form and we have

\noindent \(m_{2}^{(j)} = (22-5r-q_1(r))m_{2}^{(j-1)} +\ell_{2}^{(j-1)}n_3+\ell_{3}^{(j-1)}n_2 , \) 

\noindent \(m_{3}^{(j)} = (32-8r-q_1(r))m_{3}^{(j-1)} +\ell_{3}^{(j-1)}n_3 , \) 

\noindent \(m_{4}^{(j)} =  (12-2r - q_1(r)) m_{4}^{(j-1)} + \ell_{2}^{(j-1)} n_2 . \) 

Taking linear combinations of the $V$-derivatives of Equation \eqref{VV1}, we formed the overdetermined linear system in $V(\ln \lambda)^2$ and $V(V(\ln \lambda))$:
\( \left(P_{1}^{(j)}+K(r)P_{1}^{(j-1)}\right)V(\ln \lambda)^2+ \left(P_{2}^{(j)}+K(r)P_{2}^{(j-1)}\right)V(V(\ln \lambda)) + P_{3}^{(j)}+ K(r)P_{3}^{(j-1)}=0, \) 
where $j=2,3,4$. The determinant of its coefficients matrix (that must vanish) is, after dividing by $\lambda^{4(8-2r)}$, a polynomial $\mathcal{P}(\lambda^{8-2r}, \lambda^{r-2})$ with basic coefficients. For instance the coefficient of $\lambda^{4(8-2r)}$ is 
$$
\Omega(X,Y)^8\left\vert 
\begin{array}{ccc}
k_{3}^{(2)}+K(r) k_{3}^{(1)} & \ell_{3}^{(2)}+K(r) \ell_{3}^{(1)} &  m_{3}^{(2)}+K(r) m_{3}^{(1)}   \\
k_{3}^{(3)}+K(r) k_{3}^{(2)} & \ell_{3}^{(3)}+K(r) \ell_{3}^{(2)} & m_{3}^{(3)}+K(r) m_{3}^{(2)} \\
k_{3}^{(4)}+K(r) k_{3}^{(3)} & \ell_{3}^{(4)}+K(r) \ell_{3}^{(3)} & m_{3}^{(4)}+K(r) m_{3}^{(3)} 
\end{array}
\right \vert .
$$

\end{document}